\theoremstyle{thmstyleone}%
\newtheorem{theorem}{Theorem}%  meant for continuous numbers
\theoremstyle{thmstyletwo}%
\theoremstyle{thmstylethree}%
\newtheorem{lemma}[theorem]{Lemma}%
\newcommand{{\logiclearner}}{\textsc{LogicLearner}}
\begin{document}

\title[Article Title]{LogicLearner: A Tool for the Guided Practice of Propositional Logic Proofs}

\author*[1]{\fnm{Amogh} \sur{Inamdar}}\email{amogh.inamdar@columbia.edu}

\author[1,3]{\fnm{Uzay} \sur{Macar}}\email{uzay@aiphabet.org}

\author[1]{\fnm{Michel} \sur{Vazirani}}\email{mvv2114@columbia.edu}

\author[2]{\fnm{Michael} \sur{Tarnow}}\email{m.tarnow@columbia.edu}

\author[2]{\fnm{Zarina} \sur{Mustapha}}\email{zarina@columbia.edu}

\author[2]{\fnm{Natalia} \sur{Dittren}}\email{nd2664@columbia.edu}

\author[2]{\fnm{Sam} \sur{Sadeh}}\email{ss6316@columbia.edu}

\author[1]{\fnm{Nakul} \sur{Verma}}\email{verma@cs.columbia.edu}

\author[1,3]{\fnm{Ansaf} \sur{Salleb-Aouissi}}\email{ansaf@cs.columbia.edu}

\affil[1]{\orgdiv{Computer Science}, \orgname{Columbia University}, \country{United States}} % , \orgaddress{\street{500 W 120 St.}, \city{New York City}, \postcode{10027}, \state{New York}, \country{United States}}}

\affil[2]{\orgdiv{Center for Teaching and Learning}, \orgname{Columbia University}, \country{United States}} %, \orgaddress{\street{Lewisohn Hall, 2970 Broadway}, \city{New York City}, \postcode{10027}, \state{New York}, \country{United States}}}

\affil[3]{\orgdiv{}\orgname{Aiphabet, Inc.}, \country{United States}} %, \orgaddress{\street{}, \city{}, \postcode{}, \state{}, \country{}}}

\abstract{The study of propositional logic---fundamental to the theory of computing---is a cornerstone of the undergraduate computer science curriculum. Learning to solve logical proofs requires repeated guided practice, but undergraduate students often lack access to on-demand tutoring in a judgment-free environment. In this work, we highlight the need for guided practice tools in undergraduate mathematics education and outline the desiderata of an effective practice tool. We accordingly develop {\logiclearner}\footnote{https://logiclearner.ctl.columbia.edu/}, a web application for guided logic proof practice. {\logiclearner} consists of an interface to attempt logic proofs step-by-step and an automated proof solver to generate solutions on the fly, allowing users to request guidance as needed. We pilot {\logiclearner} as a practice tool in two semesters of an undergraduate discrete mathematics course and receive strongly positive feedback for usability and pedagogical value in student surveys. To the best of our knowledge, {\logiclearner} is the only learning tool that provides an end-to-end practice environment for logic proofs with immediate, judgment-free feedback.}

\keywords{Logic, Proof, Education, Mathematics, Artificial Intelligence, Pedagogy}

%%\pacs[MSC Classification]{35A01, 65L10, 65L12, 65L20, 65L70}

\maketitle

\section{Introduction}\label{sec:intro}

\begin{figure}[t!]
\centering
\includegraphics[width=\textwidth]{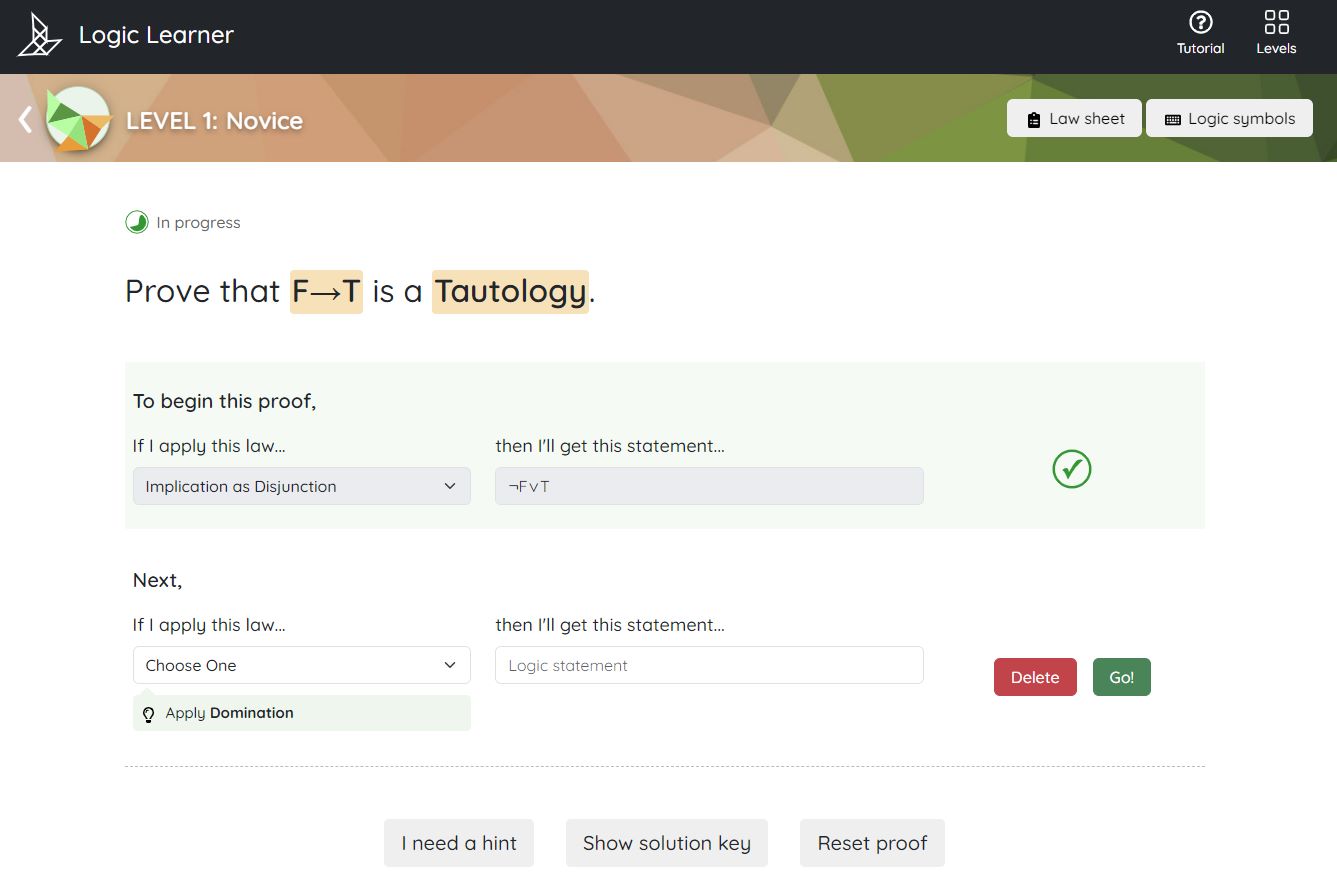}
\caption{{\logiclearner} is a holistic environment for guided logic proof practice.}\label{fig:intro}
\end{figure}

The ability to think critically and reason quantitatively is an important skill for all Computer Science (CS) students. CS students generally acquire such quantitative problem-solving skills in foundational undergraduate proof-based CS courses such as discrete mathematics. However, learning mathematics can be difficult. Students tend to find dealing with proofs challenging, and can seldom construct and communicate a long sequence of logical arguments effortlessly. Students often skip proof steps, leading to logical leaps or incorrect conclusions in their arguments. In order to promote student learning, effective course instructors identify gaps in student understanding and provide timely, high-quality feedback. Student who receive such feedback are able to build confidence, master the material, and develop valuable quantitative reasoning skills \citep{evans2013making}. However, a common issue across educational institutions and grading platforms is that instructors lack the resources to provide timely and informative feedback that promotes student learning. Once the time comes for formal assessments of learning (such as graded homeworks and exams), it is usually too late for the student to master the primary objectives of such courses. This issue is exacerbated in undergraduate-level proof-based courses such as discrete mathematics. There is a wide range in the mathematical maturity of students, and feedback on assignments and quizzes is usually received after several weeks. Additionally, per our exploration of the state-of-the-art, there are currently no off-the-shelf automated grading systems for student-level mathematical proofs to speed up the feedback process. This lack of timely feedback, coupled with a general level of student anxiety in dealing with proofs, translates directly into decreased student performance---our study of exam results in an undergraduate discrete mathematics course found that exam performance was \emph{10\% lower on proof questions} than other types of questions.

Interactive practice in undergraduate-level education is often limited to periodic tutorial sessions with instructors or teaching assistants (TAs). While TAs can do an excellent job in helping students, their limited availability creates a challenge in large undergraduate classes. Additionally, students may fear judgment from peers and instructors over the quality and frequency of their doubts in group settings, leading to reticence in seeking help. To alleviate these shortcomings, we propose the development and use of web-based tools for the guided practice of quantitative reasoning problems. We identify the following desiderata for such practice tools.
\begin{itemize}
    \item \textbf{An effective practice tool is accessible and user-friendly.} Even an excellent practice tool is of little use to students who cannot access it. Tools that are difficult to use or hidden behind paywalls tend to isolate students with limited resources, despite such students being likely to benefit the most from them. 
    \item \textbf{A practice tool must be a source of truth.} As a trusted authority on the subject matter, a teaching aid that provides incorrect information is especially detrimental to the learning process. Ideally, such tools would allow external experts to independently verify the correctness of these tools. We encourage open-source development as a strategy to ensure that pedagogical tools are both accessible and transparent.
    \item \textbf{An effective practice tool is non-judgmental.} While expert feedback is invaluable, a student who fears judgement from peers and instructors will likely be more willing to explore ideas in a self-guided setting. This is why a good practice tool should provide constructive feedback and encouragement to students. Overly critical feedback or high-stakes settings may discourage use and affect the learning process.
    \item \textbf{An effective practice tool provides only relevant guidance.} Effective instructors know that feedback must be catered to the problem setting and the type of mistake made. A tool that inundates the student with options or provides very basic feedback (e.g., only grading an answer as right or wrong) is likely to be less effective than one with specific, cogent, and relevant guidance.
\end{itemize}

With these desiderata in mind, we tackle the challenge of improving the pedagogical experience of learning to solve logic proofs. We develop {\logiclearner} (Figure \ref{fig:intro}), an open-source web application that provides a holistic environment for logic proof practice. {\logiclearner} minimizes the feedback turnaround time for logic proof practice, leading to greater student engagement with course materials related to theorems and proofs. Using tools from AI and machine learning, {\logiclearner} validates student solutions at each steps, identifies mistakes, and solves the problem on-the-fly to provide personalized hints in a judgment-free environment. {\logiclearner} is intended to supplement, not replace, the invaluable efforts of teaching assistants and tutorial sessions in a course that covers propositional logic. {\logiclearner} is a free and open-source project, managed by the Center for Teaching and Learning (CTL) at Columbia University. Visuals of the application are presented in Appendix \ref{sec:apxVisual}. The source code is available at \url{https://github.com/ccnmtl/logiclearnertools}. 
\\

Our contributions in this work can be summarized as follows:
\begin{itemize}
    \item We identify a gap in the availability of on-demand, judgment-free guidance in undergraduate computer science curricula. 
    \item We highlight the need for practice tools that provide instantaneous, high-quality feedback, and identify the desiderata of these tools for pedagogical impact.
    \item We develop {\logiclearner}, a web-based application for logic proof practice with instant guidance that is mathematically sound and judgment free. At the time of writing, {\logiclearner} is the only application for the end-to-end practice of logic proofs with guidance.
    \item Through user studies across two semesters of an undergraduate discrete mathematics course, we show that {\logiclearner} reduces student anxiety and improves their confidence in tackling logic proofs.
    \item We assess the drawbacks of AI question-answering tools (a popular alternative among students for pedagogical guidance) and show that {\logiclearner} provides superior pedagogical value over such tools in multiple ways.
\end{itemize}

\section{Related Work}\label{sec:related}

Our work combines the advances in automated problem solving and in software for mathematics/logic education. In this section, we analyze the state of the art in these areas though a pedagogical lens.

\subsection{Automated logic problem solvers}\label{subsec:rel1}

To compensate for a lack of on-demand, judgment-free pedagogical guidance, students are increasingly turning to AI question-answering systems powered by Large Language Models (LLMs), like ChatGPT\footnote{OpenAI (2023). ChatGPT (Mar 14 version); GPT 3.5 backend. \url{https://openai.com/blog/chatgpt}}, for pedagogical feedback. LLMs \cite{achiam2023gpt, touvron2023llama, vicuna2023} are neural networks with billions of parameters that are trained for text generation on internet-scale datasets. With previously unseen prowess on natural language tasks, LLMs are now the basis of applications across domains ranging from medicine to finance \cite{thirunavukarasu2023large, singhal2022large, cui2023chatlaw, webersinke2022climatebert, wu2023bloomberggpt}. However, using an LLM-based application as a learning tool for mathematics has several drawbacks. LLMs require the use of specialized prompting techniques to score well on mathematical reasoning tasks. Chain-of-Thought prompting \cite{wei2022chain} formulates step-by-step LLM prompts to emulate human reasoning, improving on question-only prompts for simple arithmetic and logic problems. Extensions such as Graph of Thoughts \cite{besta2023graph} and MathPrompter \cite{imani2023mathprompter} develop increasingly complex input- and early-layer techniques for sophisticated users. Drori et al. \cite{drori2022neural} leverage the OpenAI Codex LLM \cite{chen2021evaluating} to produce impressive question-answering performance on undergraduate mathematics courses, but explicitly state that the model cannot solve ``questions with solutions that require proofs''. Datasets that have been developed to evaluate reasoning ability \citep{ontanon2022logicinference, cobbe2021training, hendrycks2021measuring} primarily consist of problems with simple, few-step solutions. Additionally, LLMs have been shown to hallucinate factual information \cite{huang2023survey} and are susceptible to prompt manipulation \cite{cohen2024comes}. Hence, students relying on LLMs for assistance with proofs are at risk of receiving \emph{confident but incorrect} answers that stunt the learning process.

Another focus in the literature is on tool-assisted human problem solving. Automated theorem-provers such as Isabelle \cite{nipkow2002isabelle} have been used to develop human-friendly representations of logic problems \cite{Villadsen_2022} and other mathematical structures \cite{fuenmayor2022formalising}. Other tools---such as the popular mathematical software Wolfram \cite{weisstein}---simplify or evaluate Boolean expressions using truth tables, but are not directly useful for the development of proof-solving skill.

While these works make impressive strides in AI mathematical reasoning, they are not yet advanced enough to solve problems requiring several steps of reasoning. Most importantly, these systems \emph{are not designed for pedagogical use}. They do not provide interfaces that benefit problem-solving practice and produce full answers, preventing students from reasoning by themselves with only subtle hints when needed. Using such systems as practice tools compounds students' frustrations with the learning process. To address this important gap and facilitate practice, we develop {\logiclearner} as purpose-built learning tool that is both mathematically sound and easy to use. 

\subsection{The pedagogy of propositional logic}\label{subsec:rel2}

Propositional logic (and discrete mathematics at large) are important topics of study---not only for computer science students, but also for general mathematical maturity \citep{sandefur2022teaching, greefrath2022mathematical}.
The literature largely asserts that these are challenging topics to learn. Multiple prior works have attempted to model student behavior in learning to solve proofs. Dawkins and Roh \cite{dawkins2022aspects} posit that students have a twofold process to understanding a logical property, first generating examples and then evaluating which examples satisfy the property. This generate-and-test process is analogous to many fundamental computer search algorithms. EvoLogic \cite{galafassi2020evologic, galafassi2022evologic} takes an agent-based approach to model student responses to 10 simple logic exercises. These works do not attempt to improve the learning process, as we do in this paper. 

Logical concepts also form the basis of a few learning games. \textit{Proplog} and \textit{Syllog} \cite{ohrstrom2019teaching} require players to evaluate the logical soundness of propositions and syllogisms respectively. TrueBiters \cite{de2019truebiters} aims to gamify the process of learning truth tables by representing bitwise operators as monsters that `eat' bits and return the results, with the goal of reducing a bit string to a specified target bit. We are not aware of any games or practice tools that directly aim to build proof-solving skills, as {\logiclearner} does.

\section{Methods}\label{sec:method}

{\logiclearner} is a web application with a gamified interface that is purpose-built for learning, backed by a parser to process Boolean expressions and an AI proof solver to provide hints when requested. In this section, we describe the working of each of its components in detail. Starting with the user interface, each subsequent sub-section describes a module that is increasingly abstracted away from the student. We reiterate that {\logiclearner} is open-source and encourage community contributions to its improvement.

\subsection{User Experience}\label{subsec:m1}

As described in Section \ref{sec:intro}, ease of use is essential to a good pedagogical tool. We design {\logiclearner} as a web application with an attractive but simple interface that allows students to focus on solving the problems. For first-time users, {\logiclearner} contains a tutorial section with 6 simple steps describing usability. Students then attempt logic problems of varying difficulty. As they go through the learning process, {\logiclearner} tracks their progress and enables them to review and re-attempt previously attempted proofs. As a web application, {\logiclearner} \textbf{works well on both computers and mobile devices}, allowing for practice on the go.

\subsubsection{User Interface}\label{subsubsec:m11}

\begin{figure}[ht!]
    \centering
    \includegraphics[width=0.8\linewidth]{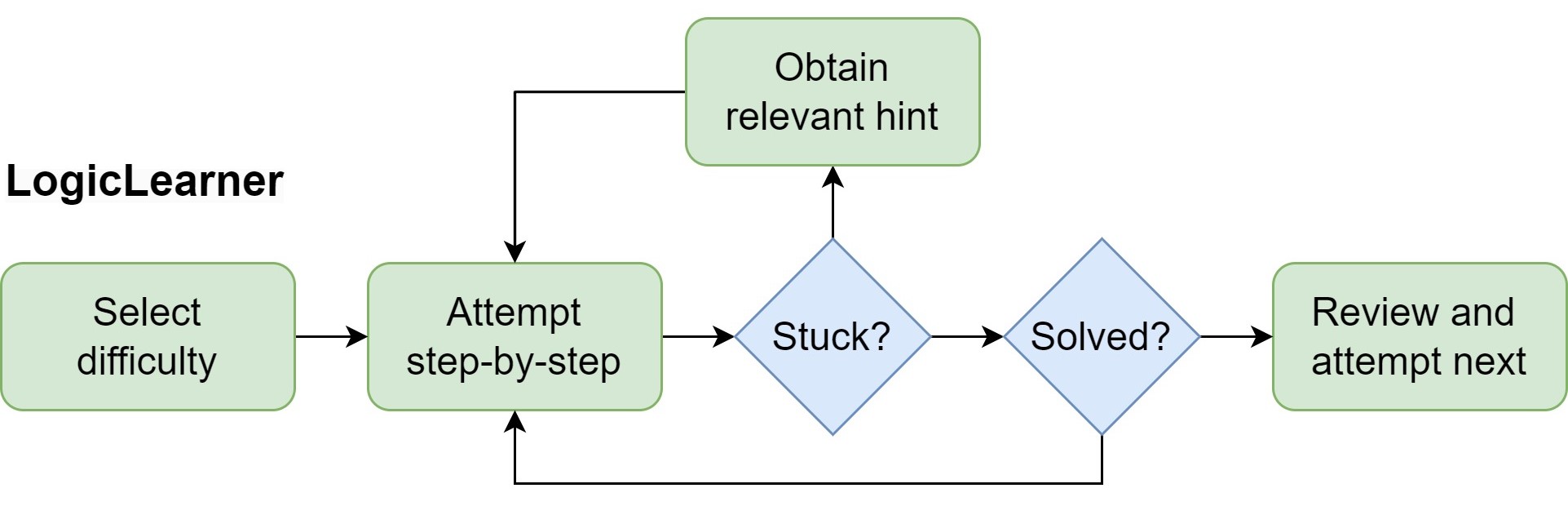}
    \caption{The user flow diagram of {\logiclearner} shows that students have a linear decision making process, enabling ease of use.}\label{fig:uflow}
\end{figure}

As seen in Figure \ref{fig:uflow}, students can choose the difficulty of the proof they will attempt. As a student attempts a proof, {\logiclearner} finds solutions from the steps they have already taken and produces hints when requested to unblock their progress. Once a solution has been found (or requested), the student is able to review the full proof and attempt the next question if desired. Detailed screenshots of the application, including the tutorial section, can be found in Appendix \ref{sec:apxVisual}.

Students use buttons to navigate the application. At each proof step, they select a logic rule to apply via a drop-down menu and enter the corresponding statement as free-form text. If incorrect, students are not explained their mistakes to allow them to reason again. Upon requesting a hint, students are provided with the correct logic rule to apply. Another hint request leads to the correct expression for the next step being provided. We use this two-step hints process to allow students to attempt problems with only partial information, just as an instructor would nudge a student in the right direction. To prevent student frustration when stuck, we also allow them to view the full solution without penalty if requested. Students are able to reset their progress and re-attempt the questions \emph{ad infinitum}. The user interface of {\logiclearner} is implemented as a Django\footnote{\url{https://www.djangoproject.com/}} web server in Python.

\subsubsection{Database}\label{subsubsec:m12}

{\logiclearner} maintains a PostgreSQL\footnote{\url{https://www.postgresql.org/}} database as the back end of the user experience. Students do not need to explicitly log in to the application, but their progress is still preserved across sessions by tracking and storing site accesses. This provides a seamless user experience where students are not prompted to repeat proofs they have already attempted unless explicitly requested. {\logiclearner} is a fully open source application, and source code is available in public repositories on GitHub\footnote{UX: \url{https://github.com/ccnmtl/logiclearner}, Back-end: \url{https://github.com/ccnmtl/logiclearnertools}}. The application server and database are actively managed by Center for Teaching and Learning at Columbia University.

\subsection{Application Back-end and Proof Solving}\label{subsec:m2}

{\logiclearner} performs several computations upon receiving user input. When a student provides a logic rule and an expression as the next step of a proof, the application must 
\begin{itemize}
    \item validate the syntax of the input expression,
    \item check for logical entailment from the current state, and
    \item find the subsequent steps of a full solution for hints.
\end{itemize}

This process---described as a data flow diagram in Figure \ref{fig:methods}---requires a computational model of Boolean expressions and logical proofs. We model Boolean expressions with a Context-Free Grammar (CFG) and describe the logic proof as a graph search problem between expressions. We then use AI search techniques to solve these proofs and provide hints to students in real time. Below, we describe this `business logic' of {\logiclearner} in detail.

\begin{figure*}[ht!]
    \centering
    \includegraphics[width=\linewidth]{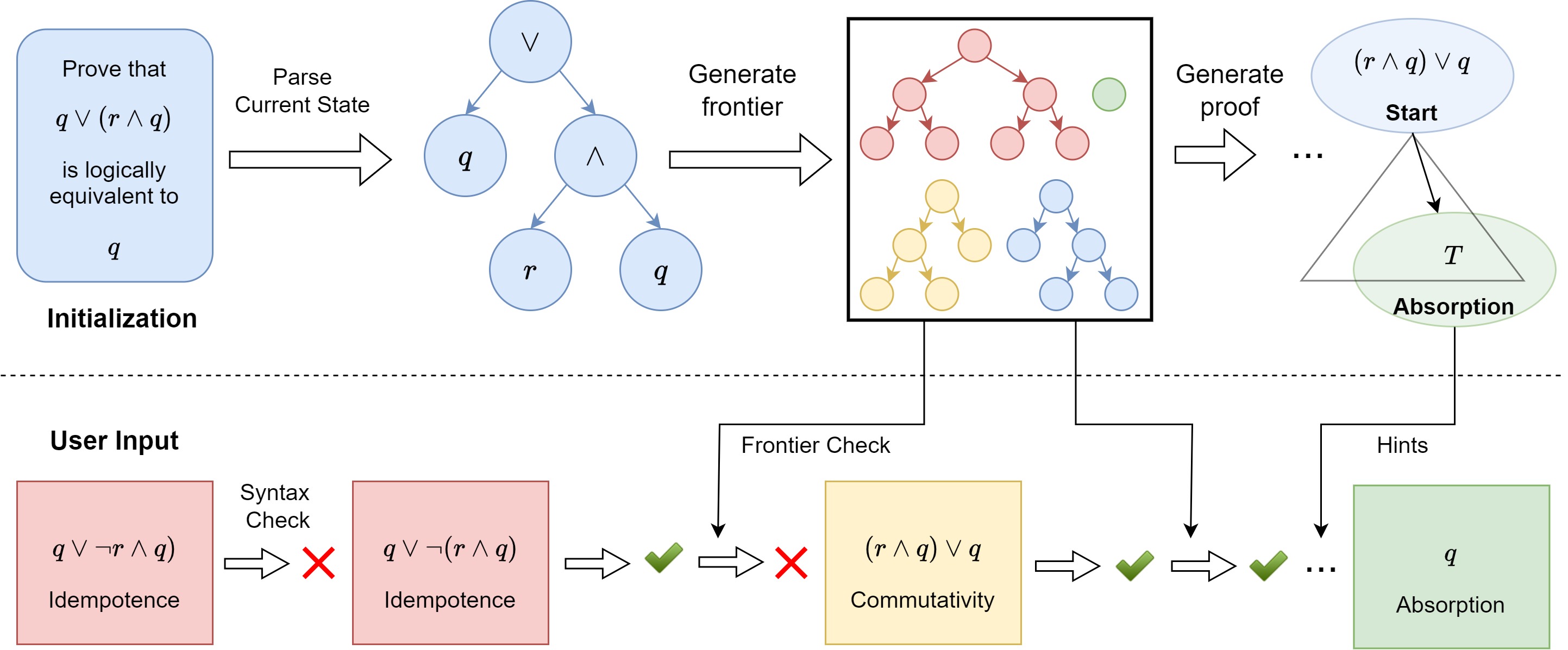}
    \caption{{\logiclearner} generates a `frontier' of all possible next steps from the current expression, and subsequently a full proof for hints. User input is checked for syntax and membership in the frontier. The user receives feedback on incorrect inputs and can request a hint at any time.}\label{fig:methods}
\end{figure*}

We use the following notation in this and subsequent sections: $X = \{x, y, z,\ldots\}$ is a set of Boolean variables. $E$ denotes the set containing every Boolean expression $e$ formed by any number of variables from $X$ and the constants $\{T, F\}$, which denote \textbf{True} and \textbf{False} respectively. $R$ denotes the set of logic rules, which are transformations according to the named logical equivalences between expressions (Idempotence, Associativity, etc., Table \ref{tab:rules}). Every $r\in R$ transforms expression $e$ into one of a set of expressions $\{e_1, e_2,\ldots\}$, where each $e_i$ is formed by applying $r$ to a different position in $e$. If $r$ cannot be applied to $e$ then $r(e)=\emptyset$. To fully represent an equivalence relation, we define the inverse of a rule $r$ to be $r^{-1}\in R$ such that $e\in r^{-1}\circ r(e)$. For example, we can apply Idempotence ($r$) to the expression $p\lor q \lor q$ to derive $p\lor q$, and likewise apply Idempotence ($r^{-1}$) to $p\lor q$ to derive $p\lor q \lor q$. We phrase our problems as `\textit{Prove that \textbf{source} is logically equivalent to \textbf{target}.}', where \textbf{source} and \textbf{target} are Boolean expressions. When \textbf{target} is \textbf{True}/\textbf{False}, the question is phrased `\textit{---is a Tautology}'/`\textit{---is a Fallacy}' respectively. Every problem is guaranteed to have a solution.

\begin{table}[ht!]
\begin{center}
\begin{minipage}{\textwidth}
\caption{The rules of propositional logic}\label{tab:rules}%
\begin{tabular}{@{}lll@{}}
\toprule
Rule & Equivalence 1 & Equivalence 2  \\
\midrule
Absorption      &  $p \lor (p\land q) \equiv p$ & $p \land (p\lor q) \equiv p$ \\
Associativity   &  $p \lor (q\lor r) \equiv (p \lor q) \lor r$ & $p \land (q\land r) \equiv (p \land q) \land r$ \\
Commutativity   & $p \lor q \equiv q \lor p$ & $p \land q \equiv q \land p$ \\
De Morgan's Law & $\neg(p \lor q) \equiv \neg p\land \neg q$ & $\neg(p \land q) \equiv \neg p\lor \neg q$ \\
Distributivity  & $p \lor (q\land r) \equiv (p \lor q) \land (p \lor r)$ & $p \land (q\lor r) \equiv (p \land q) \lor (p \land r)$ \\
 & $p \lor (q\lor r) \equiv (p \lor q) \lor (p \lor r)$ & $p \land (q\land r) \equiv (p \land q) \land (p \land r)$ \\
Domination      &  $p \lor T \equiv T$ & $p \land F \equiv F$ \\
Idempotence     &  $p \lor p \equiv p$ & $p \land p \equiv p$ \\
Identity        & $p \lor F \equiv p$ & $p \land T \equiv p$  \\
Iff as Implication         & $p \leftrightarrow q \equiv (p \to q) \land (q \to p)$ & -- \\
Implication as Disjunction &  $p \to q \equiv \neg p \lor q$ & -- \\
Negation        & $p \lor \neg p \equiv T$ & $p \land \neg p \equiv F$ \\
\botrule
\end{tabular}
\end{minipage}
\end{center}
\end{table}

\subsubsection{Validating user input}\label{subsubsec:m21}

To take a step towards the solution from their current state $e_t$, a user selects a logic rule $r$ and inputs a Boolean expression $e_{t+1}$. The input is `valid' if it is a syntactically correct Boolean expression that is entailed by the selected rule of logic, i.e., $e_{t+1}\in r(e_t)$. 

We use a left-recursive context-free grammar for Boolean expressions and use the Lark\footnote{\url{https://pypi.org/project/lark-parser}} parser-generator to validate user syntax. The grammar accounts for variations in token representation (such as `1', `T', or `True' for the \textit{true} truth value) and requires a maximum look-ahead of 1, enabling the use of Lark's fast and efficient LALR(1) parser. Each parse tree node is annotated the start and end positions of its token span.

\subsubsection{Modeling the logic proof}\label{subsubsec:m23}

Solving logic problems is challenging to students who first encounter them. To guide students who are stuck, we develop a two-level hints feature that can be triggered to provide the correct rule and subsequently the correct expression for the next step. To do this, {\logiclearner} must find a path to the target expression $e_t$ from the user's current expression $e_c$. The size of the Boolean state space and unpredictable user input make pre-computed or brute-force solutions infeasible. Here, we model logic proofs as a graph search problem and use A* search \cite{Hart1968} to dynamically find solutions. We define some properties and assumptions of our search graph below: 
\begin{enumerate}[1.]
\item Every node represents a Boolean expression $e$. For a given proof, the student must derive expression $e_t$ starting from expression $e_s$.
\item Every edge represents a transition between nodes brought about by applying a logic rule $r_i$. That is, $\exists$ edge $(e_p,e_q)$ $\forall i, p, q$ s.t. $r_i(e_p) \supseteq e_q$.
\item Since every rule $r_i\in R$ is an equivalence relation, every edge $(e_p, e_q)$ is bidirectional.
\item We only focus on `solvable' problems of the type \textit{`Prove that $e_s$ is logically equivalent to $e_t$'}, so that at least one path (proof) from $e_s$ to $e_t$ is known to exist.
\end{enumerate}

These properties ensure that every attempt of a {\logiclearner} problem always has a path to success.

\begin{lemma}\label{lemma:1}
 In every {\logiclearner} problem, the target expression $e_t$ is always reachable from any expression $e_c$ that a student derives from the premise $e_s$.  
\end{lemma}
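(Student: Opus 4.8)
The plan is to argue entirely at the level of the search graph defined above, exploiting the fact that reachability in this graph is an equivalence relation. First I would record that the graph is undirected: property~3 states that every edge $(e_p,e_q)$ is bidirectional, which itself follows from the notational convention that each rule $r\in R$ has an inverse $r^{-1}\in R$ with $e\in r^{-1}\circ r(e)$. Consequently the relation ``$e_1$ is reachable from $e_2$'' is reflexive (the empty rule sequence), symmetric (reverse a path edge-by-edge, replacing each rule by its inverse), and transitive (concatenate paths); in other words, reachability partitions $E$ into connected components.

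Next I would unpack what it means for a student to ``derive $e_c$ from the premise $e_s$.'' By the interface described in Section~\ref{subsubsec:m21}, each accepted step carries the current expression $e_t$ to some $e_{t+1}\in r(e_t)$ for the chosen rule $r$, and by property~2 this is precisely an edge of the search graph. Hence any finite sequence of accepted steps is a walk from $e_s$ to $e_c$, so $e_c$ lies in the same connected component as $e_s$. I would then invoke the standing assumption (property~4, i.e.\ ``every problem is guaranteed to have a solution'') that a proof of $e_s\equiv e_t$ exists, which is exactly a path from $e_s$ to $e_t$; thus $e_t$ lies in the connected component of $e_s$ as well. Combining these two facts via transitivity and symmetry of reachability yields a walk $e_c \rightsquigarrow e_s \rightsquigarrow e_t$, the desired path.

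I do not anticipate a genuine obstacle: the substance of the lemma is the bookkeeping that the three graph properties (edges $=$ rule applications, bidirectionality, solvability) jointly force $e_s$, $e_c$, and $e_t$ into a single connected component. The one point that deserves explicit care is the symmetry step---checking that reversing a student's derivation is itself a legal derivation---which is exactly why the inverse rules $r^{-1}$ were built into the notation; I would state this reversal argument in a sentence rather than belabor it. Finally, I would close with a remark that this lemma is what guarantees the hint-generating A* search is always invoked on a solvable instance, so it can never fail to return a path when a stuck student requests guidance.
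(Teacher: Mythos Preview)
Your proposal is correct and follows essentially the same argument as the paper: reverse the student's derivation $e_s\to e_c$ using the inverse rules guaranteed by Property~3, then invoke Property~4 to continue from $e_s$ to $e_t$. Your framing in terms of connected components and the equivalence-relation structure of reachability is a slight elaboration, but the substantive steps are identical to the paper's three-sentence proof.
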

\begin{proof}
Suppose the student applied logic rules $r_1, r_2,\ldots,r_n$ in sequence to start expression $e_s$ to reach $e_c$ (i.e., $e_c\in r_n\circ r_{n-1}\circ\cdots\circ r_1(e_s)$). Property 3 lets us apply their corresponding inverses to $e_c$ to obtain $e_s$ (i.e., $e_s\in r^{-1}_1\circ r^{-1}_2\circ\cdots\circ r^{-1}_n(e_c)$). By Property 4, $e_t$ is reachable from $e_s$.
\end{proof}

Thus, solving a logic proof is equivalent to searching a connected graph containing $e_s$ and $e_t$. Since rules like Idempotence and Absorption can be chained indefinitely, this graph has an infinite depth. Note that the method in our proof may not produce the shortest path from $e_c$ to $e_s$. Our AI search strategy is also motivated by the fact that hints that backtrack to a start state make for a poor user experience.

\subsubsection{Finding a search frontier}\label{subsubsec:m22}

The search frontier $F(e)$ of a Boolean expression $e$ is the set of all expressions obtained applying a logical equivalence to $e$, i.e., $F(e) = \{e' | \exists r\in R: e'\in r(e)\}$. \textbf{FRONTIER\_GEN}, our efficient frontier-generation algorithm, uses Lark's \textbf{Transformer} feature to traverse up the expression's annotated parse tree from leaves to root, replacing each node by the set of possible transforms at that node. This culminates at the root as a set of all potential next-steps via logical substitution. \textbf{FRONTIER\_GEN} visits each subtree of the parse tree exactly once and applies each logic rule to the token span at that root, keeping all valid transformations.

A sub-expression of an expression $e$ is a contiguous span of $e$'s tokens that forms a syntactically correct expression, appearing as a sub-tree in $e$'s parse tree. The smallest sub-expression, a single literal (Boolean variable or constant), forms a leaf node in the parse tree. Thus, an expression $e$ containing $|e|$ literals ($\le$ num. leaves) has at most $2|e|$ sub-expressions ($\le$ num. subtrees). The number of logic rules $|R|$ is constant and we implement each rule in $\mathcal{O}(1)$ operations. Hence, the size of the search frontier $F(e)$ is $\mathcal{O}(|e|)$ and \textbf{FRONTIER\_GEN} finds this frontier in $\mathcal{O}(|R|\cdot|E|)=\mathcal{O}(|E|)$ operations, which is worst-case optimal.

\begin{algorithm}[hbt!]
\caption{FRONTIER\_GEN}\label{alg:cap}
\begin{algorithmic}
\Require Annotated parse tree $T(e)$ of Boolean expression $e$, Rule set $R$
\Ensure Search frontier $F(e)$
\State $F \gets \emptyset$ 
\While{Transform(e)}  \Comment{Lark Transformer iterates leaf-to-root}
    \State $n \gets \verb|get_current_node|()$
    \State $t_s,t_e \gets \verb|get_token_span|(n)$
    \State $S_n \gets \verb|apply_logic_rules|(R, n)$  \Comment{Set of next-step sub-expressions}
    \While{$S_n \neq \emptyset$}
        \State $s \gets \verb|get_element|(S_n)$
        \State $F \gets F\bigcup \verb|concat|(e[:t_s],s,e[t_e:])$  \Comment{Sub. rule result into token span}
        \State $S_n \gets S_n - \{s\}$
    \EndWhile
\EndWhile
\end{algorithmic}
\end{algorithm}

\subsubsection{Solving proofs with A* Search}\label{subsubsec:m24}

The A* graph search algorithm aims to find the lowest path between two nodes on a weighted graph using heuristic approximations of unknown path costs. A `consistent' heuristic---one that obeys the Triangle inequality---guarantees that A* search finds an optimal path between two nodes \citep{Hart1968}. This makes it well-suited to solving logic proofs when structured as a Boolean graph search problem. However, we are not aware of an efficient heuristic that consistently estimates the semantic similarity (shortest path) between arbitrary Boolean expressions. Instead, we focus our efforts on a variety of heuristics that approximate the path length between expressions. 

\begin{figure*}[ht!]
    \centering
    \begin{subfigure}[t]{0.8\textwidth}
        \centering
        \includegraphics[width=\linewidth]{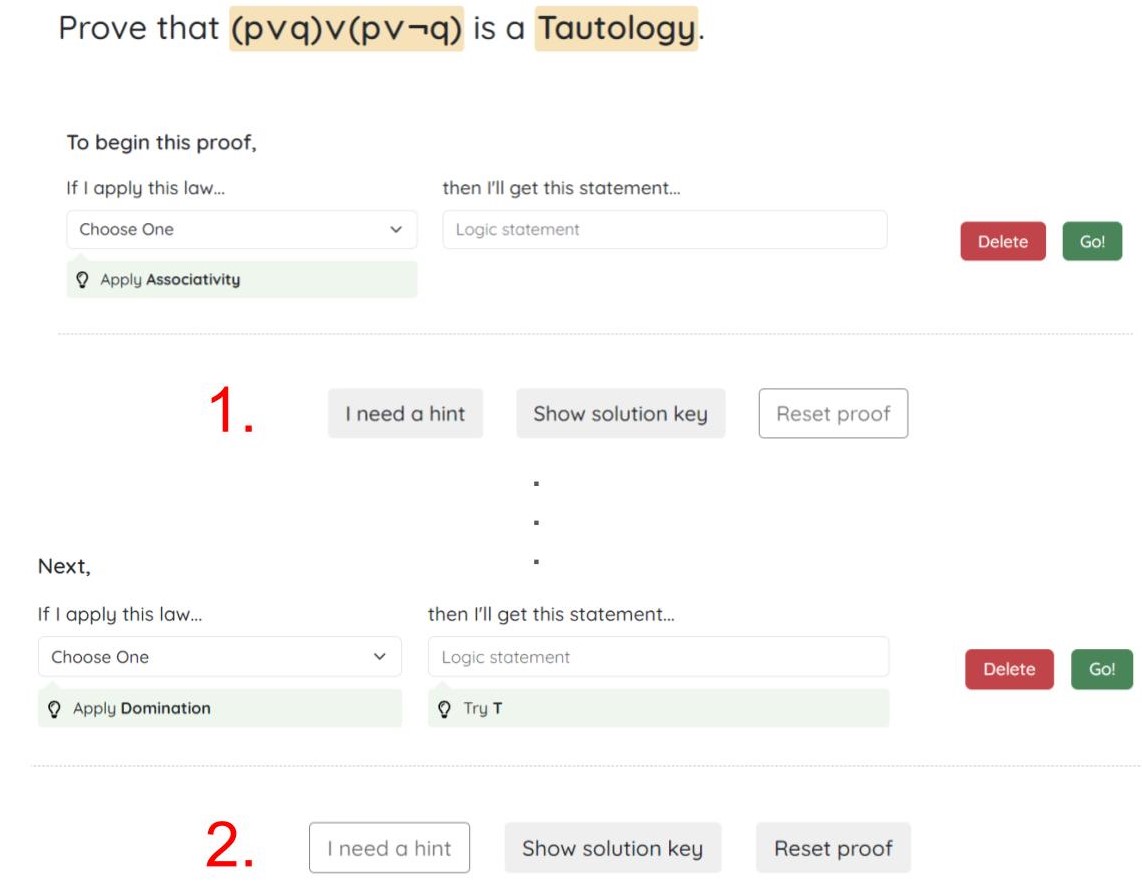}
        \caption{}
    \end{subfigure}%
    
    \begin{subfigure}[t]{0.9\textwidth}
        \centering
        \includegraphics[width=\linewidth]{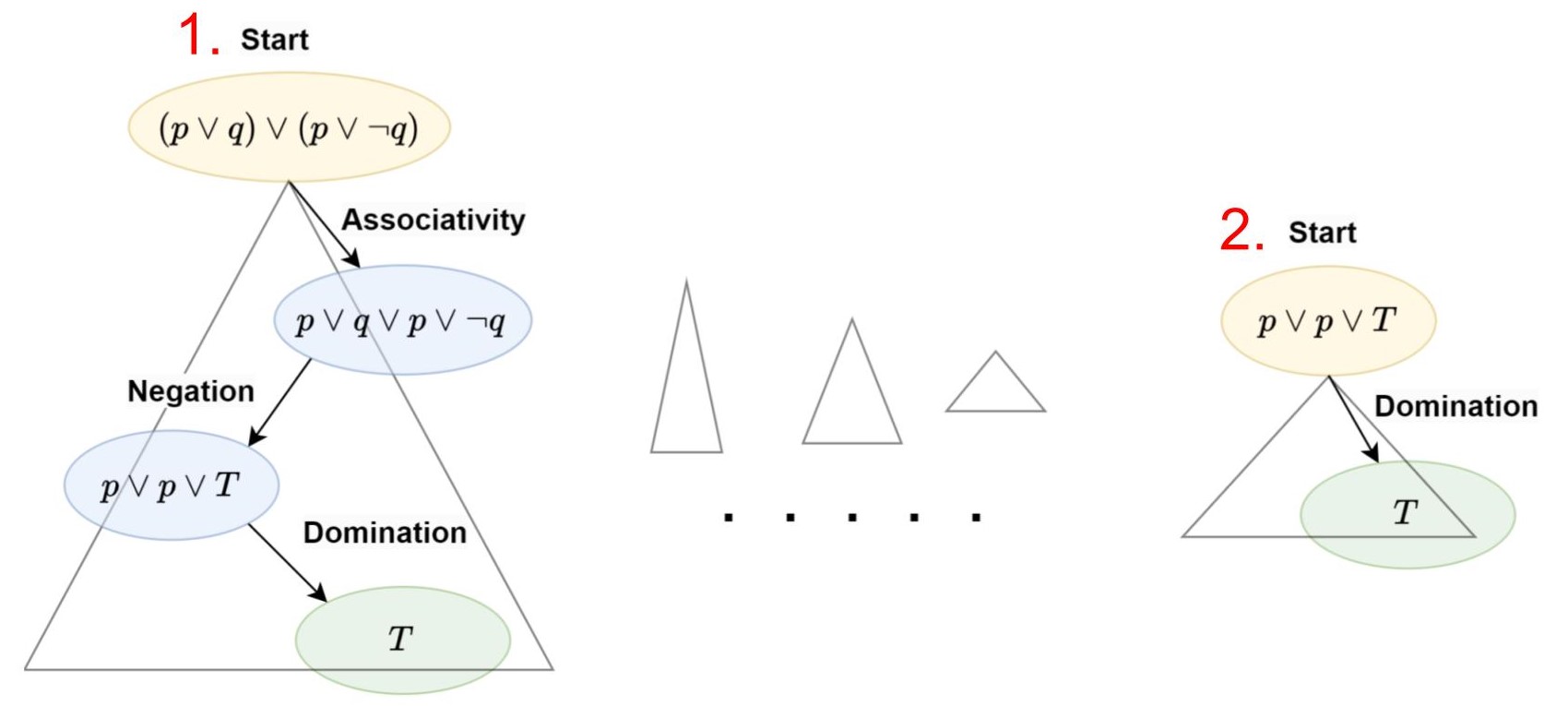}
        \caption{}
    \end{subfigure}
    \caption{(a) A users attempts a logic proof on {\logiclearner} and requests hints when stuck. (b) Search trees are calculated at every step of the proof attempt. The trees for the two requested hints in part (a) are as depicted here as numbered.}\label{fig:proofTrees}
\end{figure*}

As a first approximation, we use a weighted linear combination of measures of similarity between string representations of Boolean expressions (Table \ref{tab:heur}). Since logic rules are not applied with uniform probability, we also consider the rule used to generate a search node. Since computing a gradient over our objective (number of questions solved) is infeasible, we optimize this combination with a genetic algorithm \cite{holland1973genetic} which `evolves' a population of weighted combinations of heuristics (candidate solutions). Candidate fitness is evaluated on the training problem set and the next generation is chosen via fitness-proportionate selection with elitism. The weight ranges, crossover and mutation probabilities, and degree of elitism are determined empirically. Our genetic algorithm produces a marked improvement in performance compared to a randomly weighted ensemble, but we do not argue that the resulting heuristic is consistent.

In production, we implement a time-bound, depth-limited A* path-finding algorithm that searches the Boolean graph space for the target node beginning from the start node. We use the above ensemble, optimized on our question bank, as our heuristic of choice. Since {\logiclearner} provides real-time hints, we constrain its search time to a few seconds and return the path to the lowest-cost node if the solution isn't found. Limiting the search depth compensates for inconsistent heuristics by preventing meandering search paths. As an illustration, stylized search trees corresponding to hints requested on the interface are shown in Figure \ref{fig:proofTrees}. Details about our training methodology, hyperparameter search, and production heuristic can be found in Appendix \ref{sec:apxAblate}.

\begin{table}[ht!]
\begin{center}
\begin{minipage}{\textwidth}
\caption{A* search heuristics}\label{tab:heur}%
\begin{tabular}{@{}ll@{}}
\toprule
Heuristic & Description (computed on a transform $(e_1, r, e_2)$ where $e_2\in r(e_1)$) \\
\midrule
Unitary function & Returns 1 regardless of input \\
Levenshtein distance & The number of single-character edits required to transform $e_1$ to $e_2$ \\
Variable mismatch & The number of variables that appear either only in $e_1$ or only in $e_2$ \\
Length difference & Absolute difference in string length between $e_1$ and $e_2$ \\
Rule weight (for each $r\in R$) & A prior on rule $r$ proportional to its frequency in training data \\
\botrule
\end{tabular}
\end{minipage}
\end{center}
\end{table}

\subsection{Extensions: question generation and neural embeddings}\label{subsec:m3}

While our production heuristic is efficient and effective, it approximates the complex semantics of Boolean algebra with only a few surface-level comparisons. A natural extension to this is to integrate \emph{semantic} information into our heuristic ensemble by modeling the space of Boolean expressions in a way that allows for efficient approximations of similarity. As a proof-of-concept, we develop a neural network to learn the semantics of Boolean expressions for proof-solving though auxiliary tasks.

Neural networks are powerful machine learning models that are able learn complex relationships present in large datasets. Since these models require more data than can be manually curated, we develop \textbf{PROOF\_GEN} (Algorithm \ref{alg:qgen}) to automatically generate logic proofs. Starting from a given target expression $e_t$, \textbf{PROOF\_GEN} generates a search frontier and randomly selects an expression $e_{t-1}$ from it. This procedure is repeated to obtain $e_{t-2},\ldots,e_{t-k+1}=e_s$. Reversed, this is now a $k$-step proof $e_s\rightarrow\cdots\rightarrow e_t$. This also provides us with a way to expand {\logiclearner}'s question bank---\textbf{PROOF\_GEN} does not guarantee elegance, but cherry-picking interesting proofs from its output is an easier task than manually composing novel proofs.

\begin{algorithm}[hbt!]
\caption{PROOF\_GEN}\label{alg:qgen}
\begin{algorithmic}
\Require Boolean expression $e$, Number of proof steps $N$
\Ensure Logic proof $P(e',e)$
\State $P \gets [\ ]$ 
\State $i \gets 0$
\While{$i < N$}
    \State $F \gets \verb|FRONTIER_GEN|(e)$
    \State $e' \gets \verb|random_select|(F)$
    \State $P \gets \verb|append|(P, e')$
\EndWhile
\State $P \gets \verb|reverse|(P)$
\end{algorithmic}
\end{algorithm}

We use \textbf{PROOF\_GEN} generate large amounts of data for learning Boolean semantics, and train encoder-decoder neural networks on two tasks: rule prediction and proof length prediction. To evaluate the potential of neural networks to embed Boolean expressions in metric space while preserving semantic similarity, we use the cosine similarity between encoded Boolean expressions as a heuristic to A* search on our human-curated question bank. We also evaluate an untrained baseline and a language model pre-trained on a multilingual text corpus. Details of these experiments and their results are shown in Appendix \ref{sec:apxNN}. \textbf{PROOF\_GEN} does not have graphical interface at the time of writing, but can be accessed via the \verb|logictools| library API.

\section{Results}\label{sec:result}

To evaluate its effectiveness, we piloted {\logiclearner} as an optional practice tool over two semesters of the undergraduate discrete mathematics class (COMS 3203) at Columbia University. We surveyed students on their confidence in solving logic proofs before and after covering the propositional logic unit of the course. We also recorded their assessment of {\logiclearner}'s features after learning. We present these results in Section \ref{subsec:r1}. In Section \ref{subsec:r2}, we compare the performance of {\logiclearner} with ChatGPT, an online question-answering tool powered by a Large Language Model (refer to Section \ref{subsec:rel1} for an overview). Analyzing ChatGPT's most common failure modes, we show that LLMs are not currently suited for proof solving and pedagogy. We also present technical results on the performance of {\logiclearner}'s AI proof solver and justify our choice of A* heuristic.

\subsection{User Study: {\logiclearner} as a practice tool}\label{subsec:r1}

To evaluate {\logiclearner} as a tool for guided practice, we conducted a user study and analyzed feedback from the students of COMS 3203, the undergraduate discrete mathematics course at Columbia University, across two semesters. Before and after the propositional logic unit, we surveyed students on their confidence in solving logic proofs in various scenarios. Confidence was self-assessed along the 7-point Likert scale \cite{likert1932technique}, ranging from `Not at all confident' to `Completely confident'. Scenarios ranged from \emph{writing a proof with assistance} to \emph{writing a proof in an exam setting}. As seen in Figure \ref{fig:aconf}, student confidence in \emph{writing a logical proof completely and correctly} improves dramatically after lessons in propositional logic and practice with {\logiclearner}. In particular, students gained confidence in solving proofs in an exam setting, the hardest scenario in our survey. This was also reflected in the course outcomes---adjusted for overall difficulty, we observed that the mean student performance on logic proof questions in the COMS 3203 exams \textbf{increased by roughly 5\%} after the introduction of LogicLearner as an optional practice tool. This is a notable improvement as the class sizes are large (300+) and overall performance on these exams is around 80\% on average, making it unlikely that large changes in results can be observed. 

\begin{figure}[ht!]
\includegraphics[width=\textwidth]{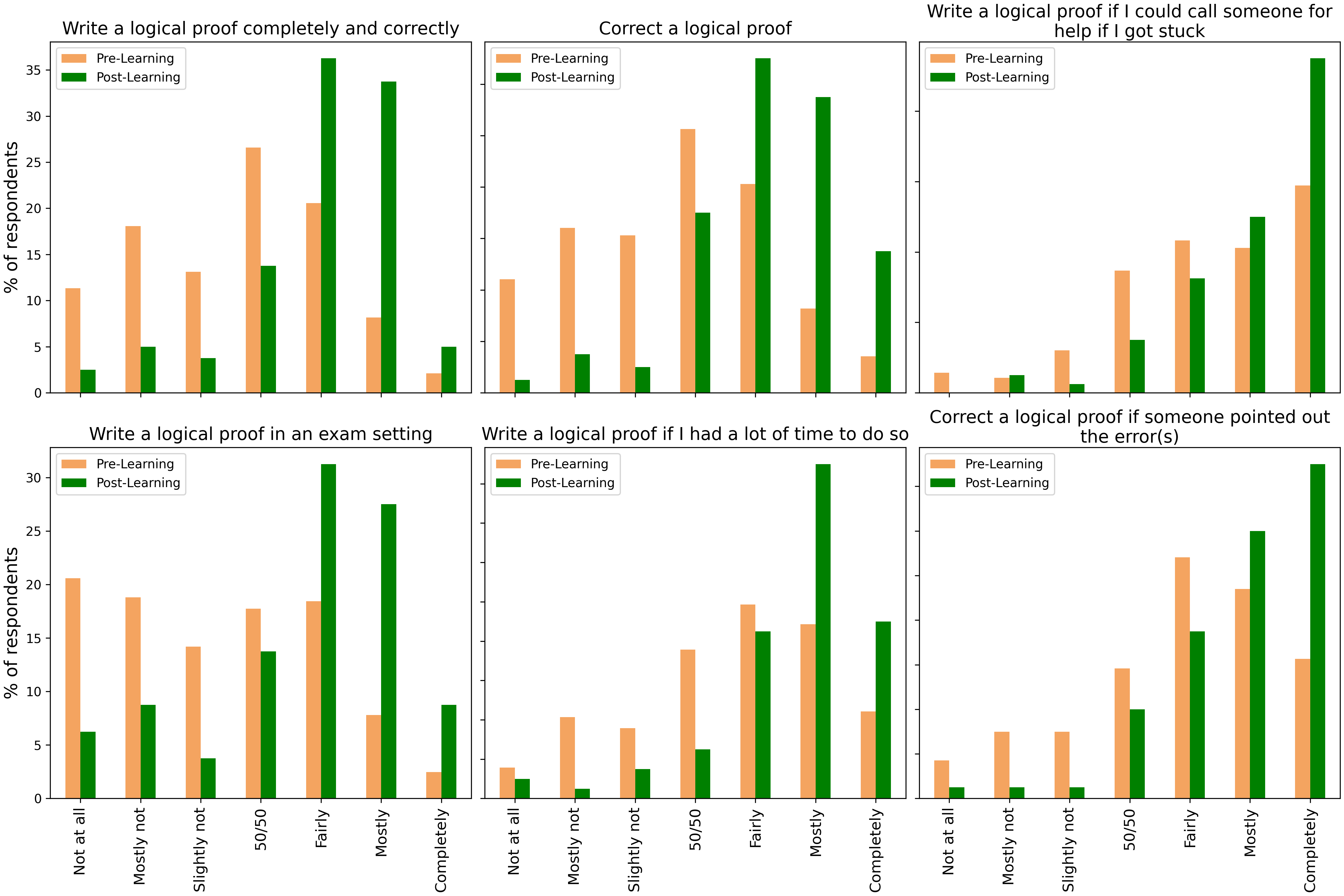}
\caption{Confidence scores for solving logic problems under various scenarios, aggregated across two semesters of a discrete mathematics course}\label{fig:aconf}
\end{figure}

Post-learning, students were prompted to rate the user-friendliness of {\logiclearner} on a scale of 1 to 5 and answer long-form questions on the application's performance. Figure \ref{fig:afeat} shows students rate {\logiclearner}'s highly, with every feature scoring a mean of approximately 4 out of 5 with a standard deviation of less than 1. We analyze the student responses to long-form questions for positive/negative intent using a sentiment classification pipeline from Hugging Face\footnote{\url{https://huggingface.co/}} (Figure \ref{fig:asentiment}). Two of the questions are negative leading questions (`Missing features/improvements' and `Ease of use/challenges in usage'), and negative responses are expected. Negative responses mostly comprise of suggestions for improvement and positive responses indicate where students found nothing lacking. The two open-ended questions (`(Would you) Recommend to others' and `(Would you) Use it to practice') show overwhelmingly positive responses, showing that respondents found {\logiclearner} to be a useful tool for logic practice. The phrasing of the long-form questions and full respondent statistics can be found in Appendix \ref{sec:apxSurvey}.

\begin{figure}[ht!]
\centering
\begin{minipage}{.5\textwidth}
  \centering
  \includegraphics[width=0.9\linewidth]{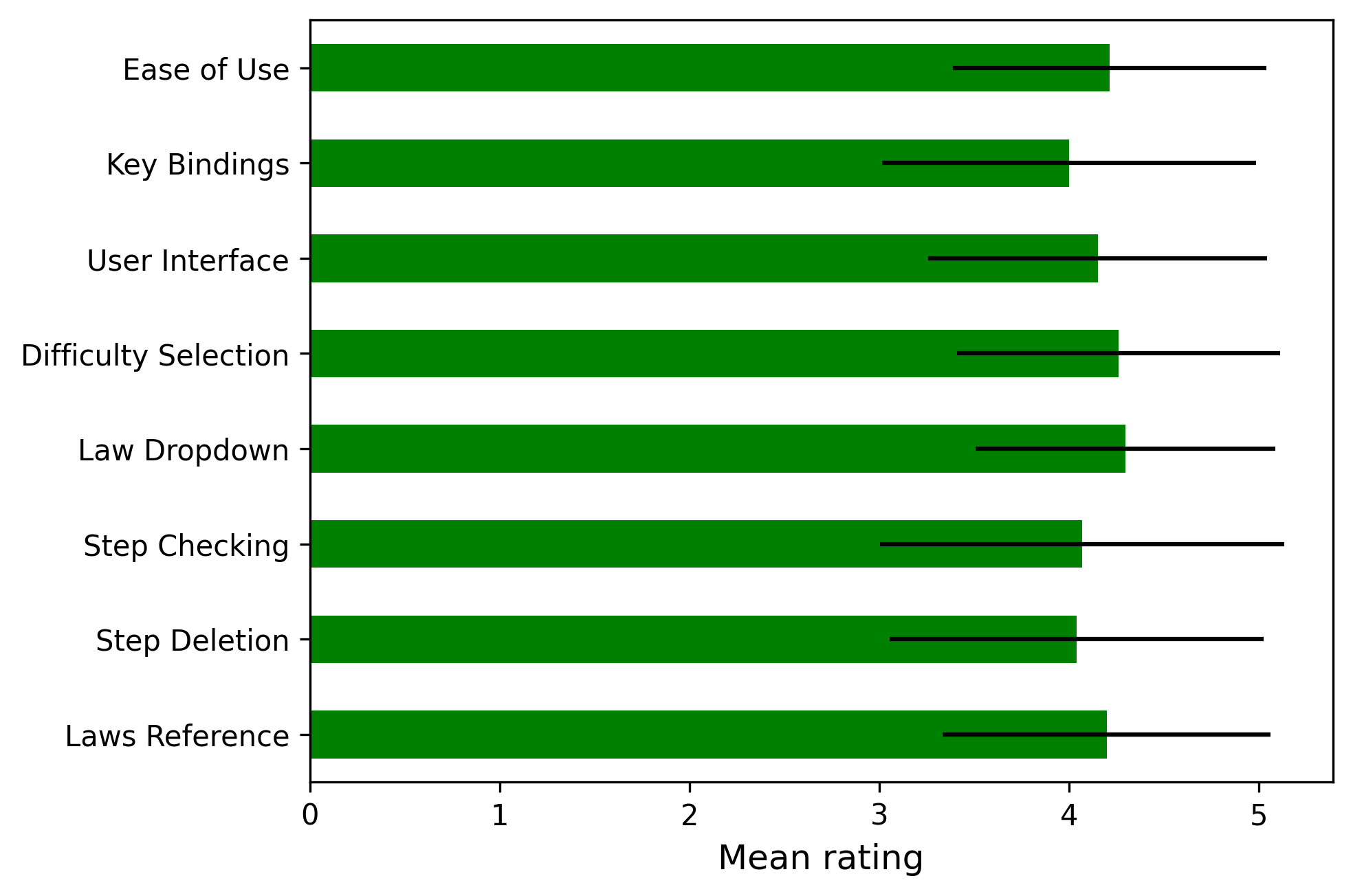}
  \caption{Aggregated mean feature ratings}
  \label{fig:afeat}
\end{minipage}%
\begin{minipage}{.5\textwidth}
  \centering
  \includegraphics[width=0.8\linewidth]{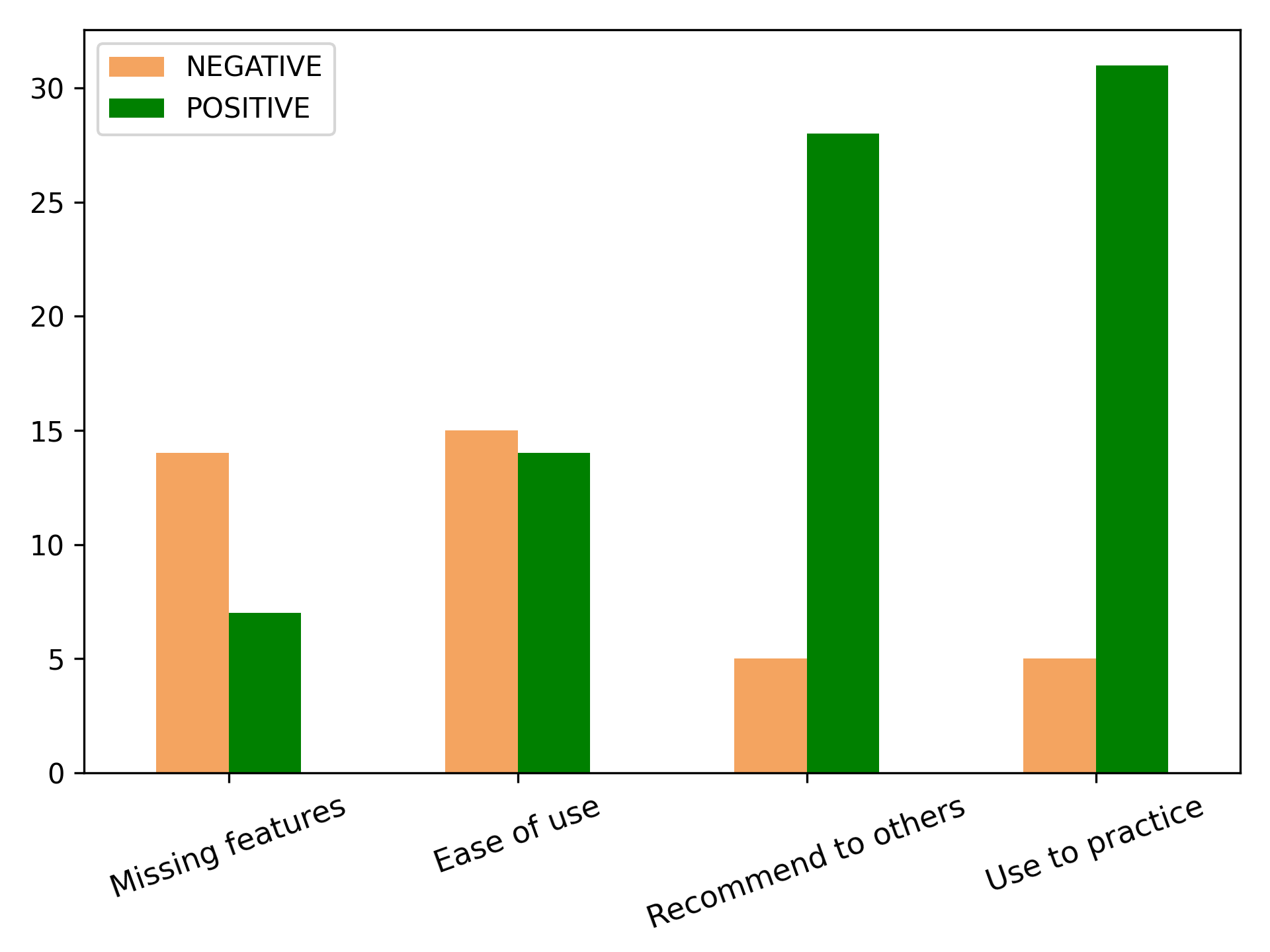}
  \caption{Sentiment to long-form questions}
  \label{fig:asentiment}
\end{minipage}
\end{figure}

For brevity, we only present the aggregate results of the surveys in these figures. Individual surveys (Appendix \ref{sec:apxSurvey}) are highly consistent, emphasizing the significance of our results. 

\subsection{Performance and AI results}\label{subsec:r2}

We use a genetic algorithm to optimize a weighted linear ensemble of heuristics (Table \ref{tab:heur}) for {\logiclearner} to solve logic proofs with A* search. The details of the optimization and ablation studies involved in our heuristic selection can be found in Appendix \ref{sec:apxAblate}. We compare {\logiclearner} with ChatGPT, a publicly available conversational interface to the GPT series of Large Language Models (LLMs) trained for text generation on an internet-scale dataset. ChatGPT is the most easily accessible AI question-answering tool to end users, and is increasingly being used in an academic setting. However, ChatGPT is not designed as a pedagogical tool, and the user experience varies significantly from that of {\logiclearner}. We find that it is also unable to produce mathematically consistent answers, and this \emph{confident but incorrect} behavior could lead to maladaptation from students learning incorrect information. Figure \ref{fig:flowCompare} describes one potential workflow of a student using ChatGPT for proof solving. In contrast with the {\logiclearner} user flow (Figure \ref{fig:uflow}), even when ChatGPT provides correct answers, it will not gradually nudge the user to a solution or take the user's progress into account unless prompted. Re-attempting and review are also not possible with such LLM-based QA systems.

\begin{figure}[ht!]
\centering
\includegraphics[width=0.8\textwidth]{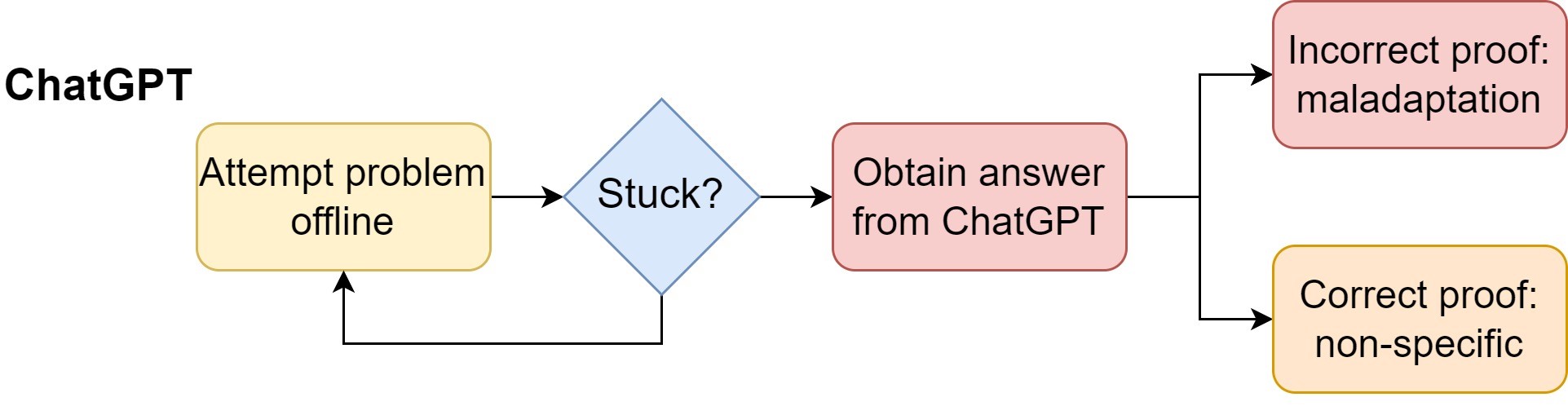}
\caption{A potential student workflow with ChatGPT for proof solving.}\label{fig:flowCompare}
\end{figure}

\subsubsection{Proof-solving performance}\label{subsubsec:r21}

We evaluate {\logiclearner} on the application's human-curated question bank of 33 logic proofs, as well as two QUESTION\_GEN (Algorithm \ref{alg:qgen}) generated datasets with 198 3-step proofs and 66 10-step proofs respectively. We also train and evaluate a simple neural network heuristic as a \emph{proof-of-concept} as described in Section \ref{subsec:m3}, and detail our experiments in Appendix \ref{sec:apxNN}. Due to resource constraints, we evaluate neural models (including ChatGPT) only on our human-curated dataset. We provide the following prompt to ChatGPT: \emph{`Using the rules of propositional logic, prove that \textbf{premise} is logically equivalent to \textbf{target}. Make sure to explain each step.'}. This prompt is manually tuned, and encourages some Chain-of-Thought reasoning \cite{wei2022chain}, though we do not claim to fully use this method. The LLM's responses are well-framed, but rarely correct. A student is likely not sophisticated enough to catch these technical mistakes, making ChatGPT infeasible as a learning aid for proof solving.

\begin{table}[ht!]
\begin{center}
\begin{minipage}{\textwidth}
\caption{AI performance on logic proof question banks.}\label{tab:aiRes}
\begin{tabular}{@{}llcccc@{}}
\toprule
Method & Heuristic & Eval time & \multicolumn{3}{c}{Score on Dataset (questions)} \\
 & & (seconds) & Curated & \multicolumn{2}{c}{Generated} \\ 
 & & & (33 mixed) & (198 small) & (66 mixed) \\
\midrule
\multirow{2}{*}{{\logiclearner}} & Comparator ensemble & 3 & \textbf{28} & \textbf{143} & \textbf{43} \\
 & Neural Network & 15 & 13 & - & - \\
\midrule
ChatGPT & Relevant prompt & 15 - 30 & 5 & - & -\\
\botrule
\end{tabular}
\end{minipage}
\end{center}
\end{table}

As seen in Table \ref{tab:aiRes}, {\logiclearner} is able to solve almost every question in our question bank in real time, providing a seamless user experience to students requesting hints. The {\logiclearner} heuristic is also able to generalize well to previously unseen and computer-generated questions of varying lengths, reinforcing its strength as a heuristic. In contrast, ChatGPT is unable to solve any proofs that are longer than 1-2 steps. Our inspection of ChatGPT's proofs shows some recurring error modes:
\begin{itemize}
    \item ChatGPT does not model the semantics of Boolean variables, resulting in false equivalences. For example, it does not distinguish between $q$ and $r$ in the bottom of Figure \ref{fig:chatFails}(a).
    \item It often hallucinates variable or rule names, even for correct equivalences. Figure \ref{fig:chatFails}(b) shows instances of ChatGPT confusing the Absorption and Domination laws, and incorrectly spelling `Commutative'.
    \item As seen in Figure \ref{fig:chatFails}(c), ChatGPT fails to correctly perform parity-dependent operations like negation and matching parentheses. Parity matching is known to be a hard problem for Transformer \cite{vaswani2017attention} based architectures like LLMs \cite{Hahn_2020}, and is one of the most frequent failure modes we observed.
    \item ChatGPT cannot accurately parse long expression sequences or maintain context over the span of a long answer, resulting in it hallucinating a conclusion to an incorrect target (Figure \ref{fig:chatFails}(d)).
\end{itemize} 

Additional details of our ChatGPT analysis (such as prompt selection) are presented in Appendix \ref{sec:apxChat}. We hypothesize that using {\logiclearner}'s proofs for Retrieval-Augmented Generation \citep{lewis2020retrieval} to LLMs could produce well-explained, mathematically sound proofs, but we leave this to future work.

\begin{figure*}[ht!]
    \centering
    \begin{subfigure}[t]{0.5\textwidth}
        \centering
        \includegraphics[width=\linewidth]{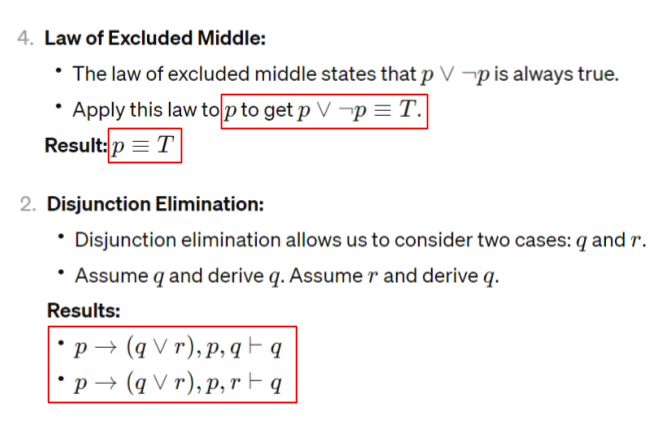}
        \caption{}
    \end{subfigure}%
    ~ 
    \begin{subfigure}[t]{0.5\textwidth}
        \centering
        \includegraphics[width=\linewidth]{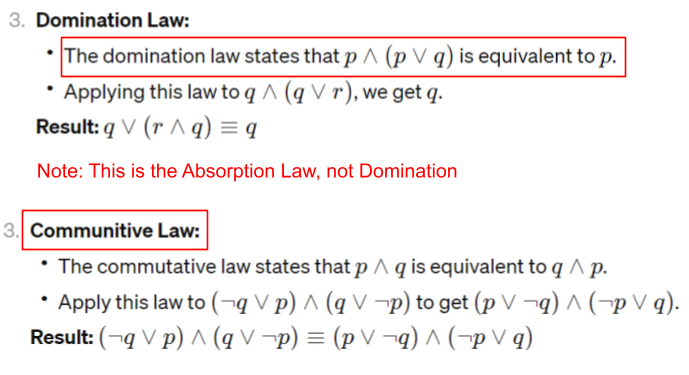}
        \caption{}
    \end{subfigure}%

    \begin{subfigure}[t]{0.5\textwidth}
        \centering
        \includegraphics[width=\linewidth]{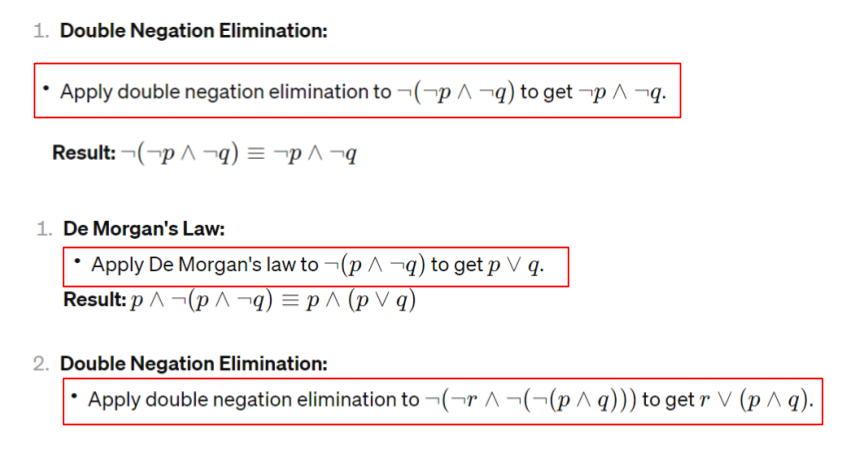}
        \caption{}
    \end{subfigure}%
    ~ 
    \begin{subfigure}[t]{0.5\textwidth}
        \centering
        \includegraphics[width=\linewidth]{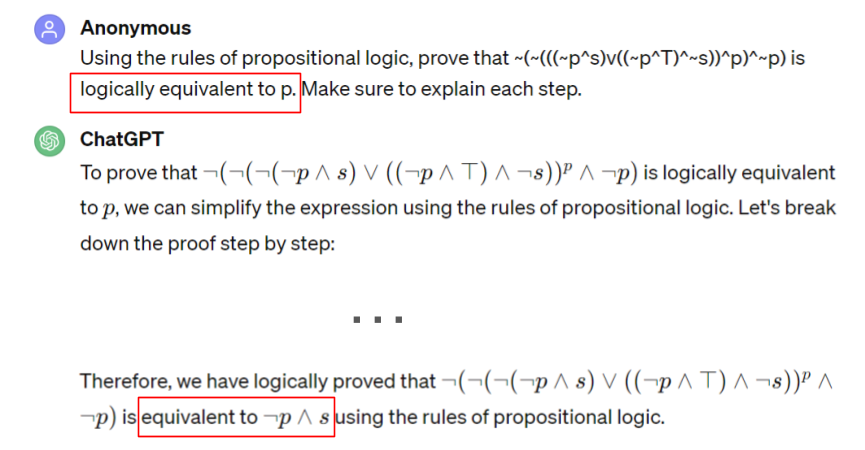}
        \caption{}
    \end{subfigure}%
    \caption{A variety of ChatGPT's failure modes on logic proofs.}\label{fig:chatFails}
\end{figure*}

\section{Limitations and future work}\label{sec:limit}

Our surveys across two semesters of undergraduate discrete mathematics show a strongly positive response to {\logiclearner}. However, voluntary student surveys are susceptible to only receiving responses from conscientious students who tend to perform well regardless of training method. Additionally, it is challenging to disentangle the benefits of {\logiclearner} from unaided student learning, since denying access to a practice tool a control group of students is unfair to that group. More concrete results would likely require a multi-year, multi-university effort. While {\logiclearner} is guaranteed to be mathematically precise, we use a simple AI heuristic that does not guarantee accuracy. Future work could explore the potential of large models as heuristics to {\logiclearner}'s graph search framework for proof solving---we explore such ideas in Appendix \ref{sec:apxNN}. {\logiclearner} is open-source, and we encourage the development of more sophisticated search methods. We also note that such improvements may make for interesting student projects.

\section{Conclusion}\label{sec:conclusion}

The study of propositional logic benefits from judgment-free guidance during practice. Building on existing work on the inability of Large Language Models (LLMs) to perform deep reasoning, we analyze the failure modes of ChatGPT, an LLM-based application that is increasingly used for pedagogical guidance. We outline the requirements of an effective practice tool for proof solving, and accordingly develop {\logiclearner}, an interactive application to practice propositional logic proofs with real-time an AI proof solver for real-time pedagogical guidance. We pilot {\logiclearner} as a practice tool over two semesters of the undergraduate discrete mathematics course at Columbia University and receive strongly positive feedback on its design and utility in student surveys. To the best of our knowledge, {\logiclearner} is the first and only application for fully-automated guided proof-solving practice at the time of writing. {\logiclearner} is free and open-source, and we look forward to the continued development of tools that democratize access to mathematical understanding.

\backmatter

\bmhead{Acknowledgements}

We thank the Center for Teaching and Learning (CTL) at Columbia University for lending their expertise in designing {\logiclearner} as an effective pedagogical tool, and for continuing to manage the {\logiclearner} application server and its open-source code base.

\section*{Declarations}

\begin{itemize}
\item \textbf{Funding} This work was supported by a generous grant from the Columbia University Provost's Faculty Committee on Educational Innovation. 
\item \textbf{Competing interests} All authors declare that they have no conflicts of interest in relation to this work.
\item \textbf{Ethics approval and consent to participate} Student surveys were approved by the Institutional Review Board. Approval number: IRB-AAAU0354.
\item \textbf{Consent for publication} All authors consent to the publication of this manuscript.
\item \textbf{Data availability} Student survey data is presented in aggregate form in Appendix \ref{sec:apxSurvey}. Full experimental results are in Appendix \ref{sec:apxAblate}.
\item \textbf{Materials availability} Not applicable.
\item \textbf{Code availability} {\logiclearner} is a free and open-source application under the GNU GPLv3 license. Source code is available at \url{https://github.com/ccnmtl/logiclearnertools}.
\item \textbf{Author contribution} AI implemented the business logic (Parser, Search, API), designed and conducted the technical experiments, and authored the manuscript. MV and UM designed and implemented an earlier version of the {\logiclearner} parser. The staff at the CTL (MT, ZM, ND, SS) designed the {\logiclearner} web application user experience and database. The application server and git repositories are managed by the CTL. As the PIs of this project, NV and AS developed the vision for {\logiclearner}, oversaw the entirety of its design and implementation, and mentored all student contributors who worked on the application.
\end{itemize}

\clearpage

\bibliography{sn-bibliography}% common bib file
%% if required, the content of .bbl file can be included here once bbl is generated
%%\input sn-article.bbl

\clearpage

\begin{appendices}

\section{Visuals of the LogicLearner Web Application}\label{sec:apxVisual}

We designed the user interface and user experience of {\logiclearner} in collaboration with the Columbia University Center for Teaching and Learning\footnote{\url{https://ctl.columbia.edu}} (CTL). The CTL specializes in designing the user experience of educational tools, and hosts the {\logiclearner} application and database on its server. 

\begin{figure}[ht!]
\includegraphics[width=0.99\textwidth, frame]{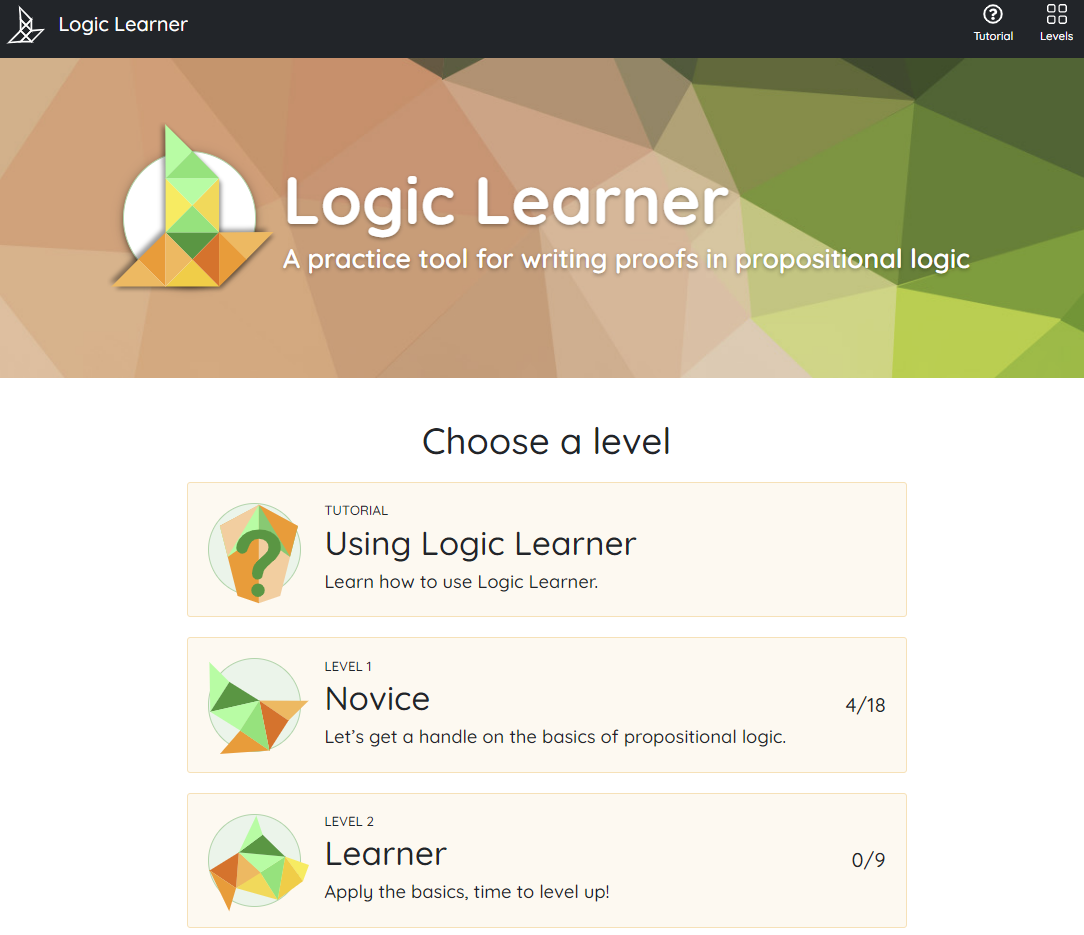}
\caption{The {\logiclearner} home screen.}\label{fig:appIntro}
\end{figure}

\begin{figure}[ht!]
\includegraphics[width=0.99\textwidth, frame]{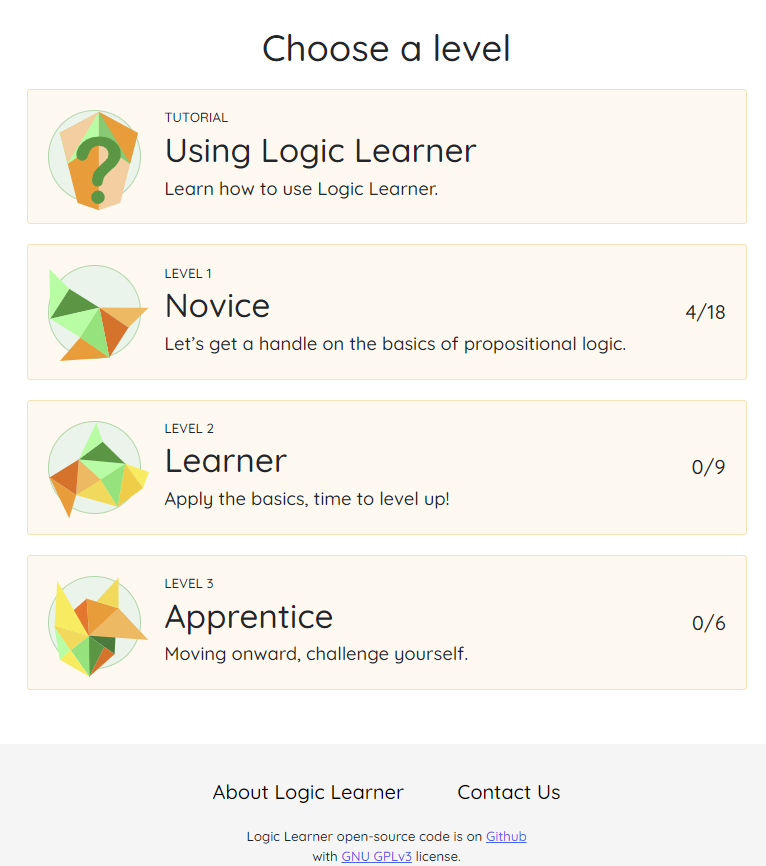}
\caption{Level selection according to proof difficulty.}\label{fig:appLevel}
\end{figure}

\begin{figure}[ht!]
\includegraphics[width=0.99\textwidth, frame]{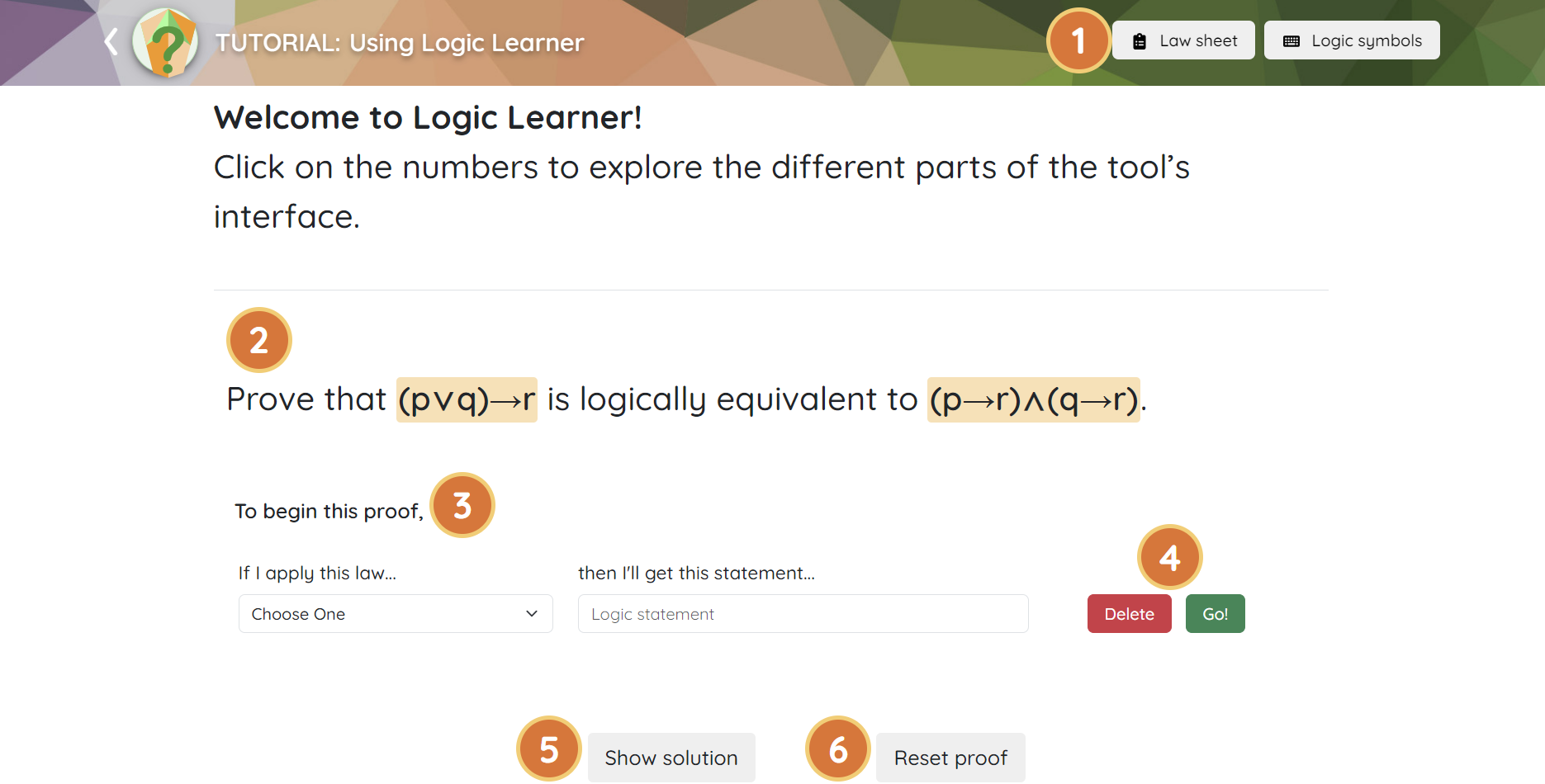}
\caption{A tutorial on using {\logiclearner}.}\label{fig:appGuide}
\end{figure}

\begin{figure}[ht!]
\includegraphics[width=0.99\textwidth, frame]{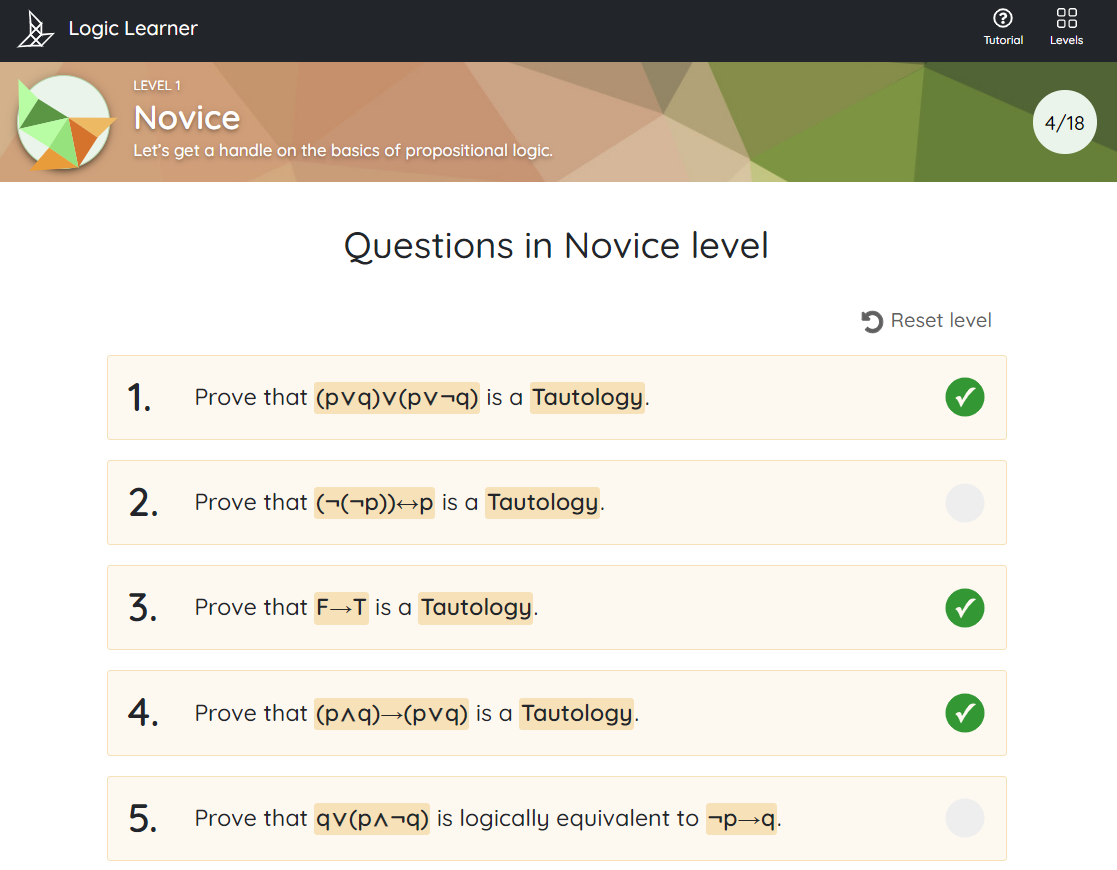}
\caption{Selecting questions in the Novice level.}\label{fig:appNovice}
\end{figure}

\begin{figure*}[t!]
    \centering
    \begin{subfigure}[t]{0.48\textwidth}
        \centering
        \includegraphics[width=\linewidth]{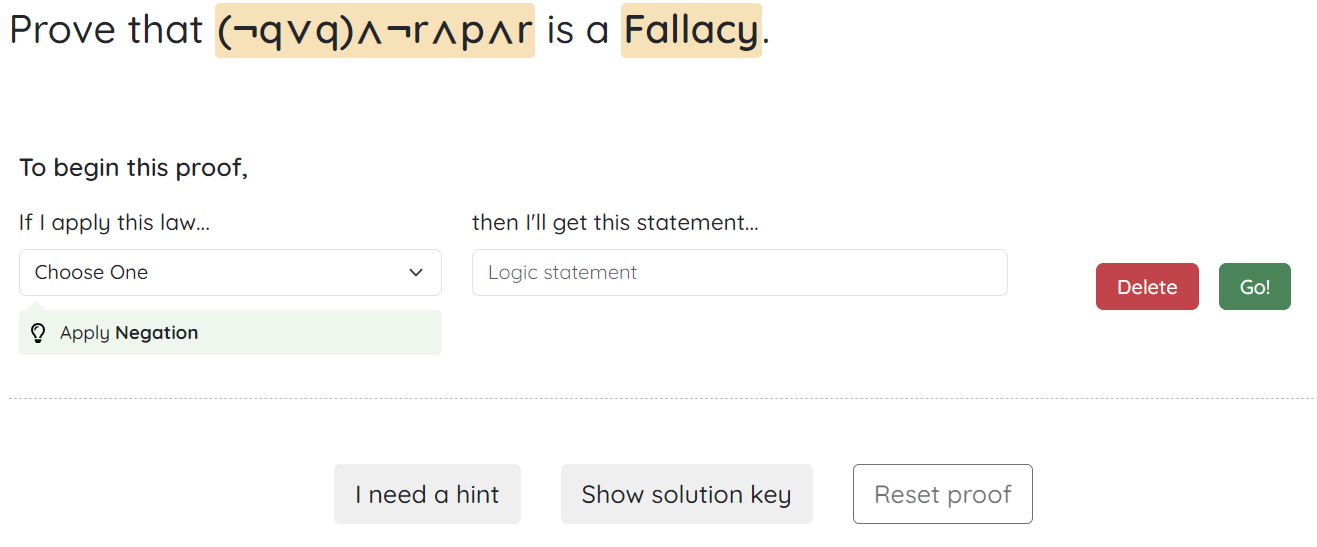}
        \caption{One hint requested.}
    \end{subfigure}%
    ~ 
    \begin{subfigure}[t]{0.5\textwidth}
        \centering
        \includegraphics[width=\linewidth]{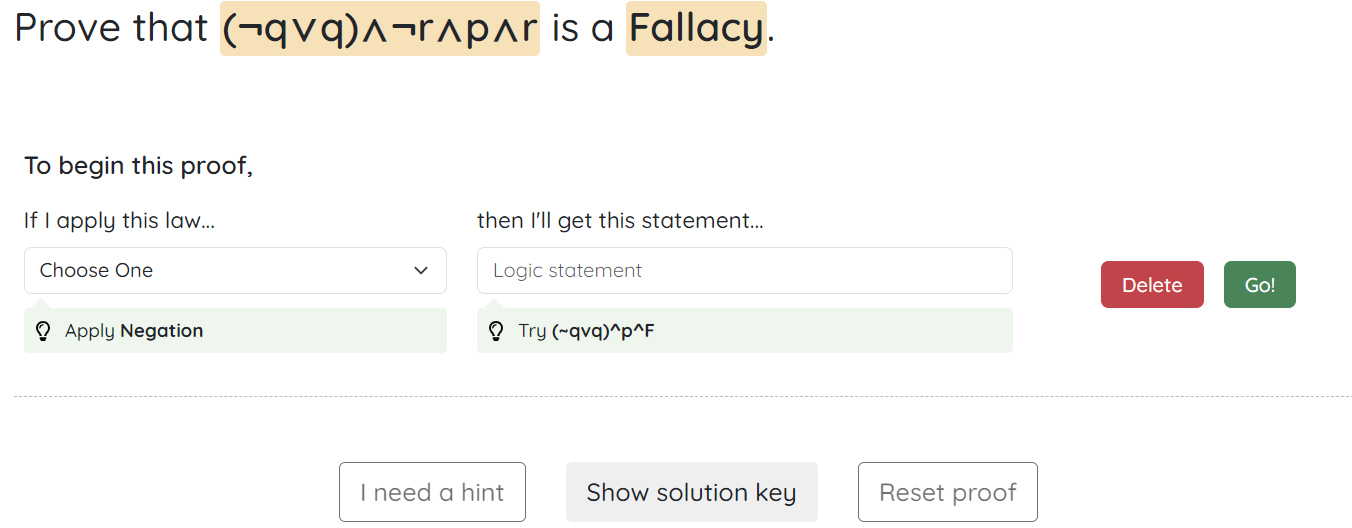}
        \caption{Two hints requested}
    \end{subfigure}
    \caption{Requesting hints during a question attempt.}
\end{figure*}

\begin{figure}[ht!]
\includegraphics[width=0.99\textwidth]{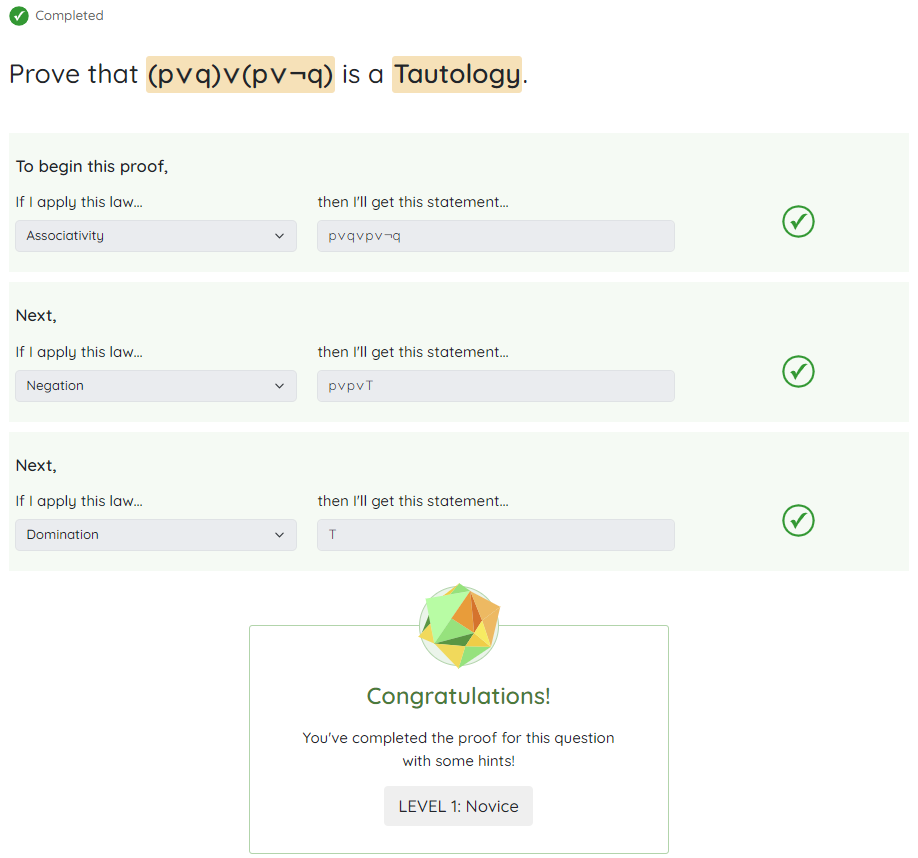}
\caption{A completed logic proof.}\label{fig:appCompleted}
\end{figure}

\section{AI Experiments and Ablations}\label{sec:apxAblate}

We train a genetic algorithm to optimize the weights of a linear combination of our heuristics defined in Table \ref{tab:heur}. Weights were constrained to floating point numbers in $[-10,10]$, which produced the best results empirically. The hyperparameters explored were the per-question time $\tau$, number of training questions $|Q|$, elitism $\epsilon$, and the crossover and mutation probabilities $P_c,P_m$. A score (questions solved) was calculated for three datasets: human-curated $(S_v)$, short generated questions ($S_t$), and mixed-length generated questions $(S_l)$. We choose a small subset of our human-curated question bank to train our algorithm, and present the final performance on all questions in \ref{tab:ablate1}. The weights in our production heuristic are presented in Table \ref{tab:prodheur}. Note that this heuristic was committed to production prior to some of the later AI improvements. 

Experiments were conducted on modest hardware with a limited search depth, and better hardware may lead to better performance. Despite this, our genetic algorithm shows a considerable improvement over the random baseline.

\begin{table}[ht!]
\begin{center}
\begin{minipage}{\textwidth}
\caption{Genetic algorithm ablation trials}\label{tab:ablate1}
\begin{tabular}{@{}llllllllll@{}}
\toprule
Population & Total & $\tau$ & $|Q|$ & $\epsilon$ & $P_c$ & $P_m$ & $S_v$ & $S_t$ & $S_l$ \\
Size & Generations & (s) & & & & & (33) & (198) & (66) \\ 
\midrule
5 & 5 & 3 & 5 & 1 & 0.8 & 0.2 & 22 & 119 & 29 \\
10 & 10 & 3 & 10 & 1 & 0.8 & 0.2 & 21 & 114 & 32 \\
10 & 10 & 3 & 20 & 1 & 0.8 & 0.2 & 27 & 129 & 38 \\
10 & 10 & 3 & 33 & 1 & 0.8 & 0.2 & 27 & 141 & 43 \\
10 & 10 & 5 & 33 & 1 & 0.8 & 0.2 & 26 & 135 & 40 \\
10 & 10 & 3 & 33 & 3 & 0.8 & 0.2 & \textbf{29} & 137 & 42 \\
10 & 20 & 3 & 33 & 3 & 0.6 & 0.2 & 28 & \textbf{143} & \textbf{43} \\
20 & 10 & 3 & 33 & 1 & 0.8 & 0.2 & 28 & 140 & 43 \\
20 & 20 & 3 & 33 & 1 & 0.8 & 0.2 & 28 & 133 & 43 \\
20 & 10 & 33 & 30 & 3 & 0.8 & 0.2 & 27 & 127 & 40 \\
\midrule
Production & & & & & & & & &  \\
\midrule
20 & 8 & 1 & 7 & 3 & 0.8 & 0.5 & 27 & 129 & 37 \\
\botrule
\end{tabular}
\end{minipage}
\end{center}
\end{table}

\begin{table}[ht!]
\begin{center}
\begin{minipage}{\textwidth}
\caption{A* search heuristics}\label{tab:prodheur}%
\begin{tabular}{@{}ll|ll@{}}
\toprule
Heuristic & Weight & Heuristic & Weight \\
\midrule
Levenshtein distance & 3.36 & Distributivity & 3.94 \\
Unitary function & 3.76 & Domination & 4.09 \\
Variable Mismatch & 6.09 & Idempotence & -7.03 \\
Length difference & 1.53 & Identity & -9.85 \\
Absorption & -3.88 & Iff as Implication & -4.20 \\
Associativity & 1.94 & Implication as Disjunction & 6.92 \\
Commutativity & -8.07 & Negation & -0.55 \\
De Morgan's Law & 3.71 & Start state & 1.44 \\
\botrule
\end{tabular}
\end{minipage}
\end{center}
\end{table}

We also trained our model on our dataset of 66 randomly generated solutions of depths 2-10 evaluated performance on the original question bank. These results, presented in Table \ref{tabR2}, show that our heuristic generalizes from our randomly generated dataset to the human-generated question bank.

\begin{table}[ht!]
\begin{center}
\begin{minipage}{\textwidth}
\caption{Train on randomly generated questions}\label{tabR2}
\begin{tabular}{@{}lllllllllll@{}}
\toprule
Population & Total & $\tau$ & $|Q|$ & $\epsilon$ & $\tau_\mathrm{eval}$ & $P_c$ & $P_m$ & $S_v$ & $S_t$ & $S_l$ \\
Size & Generations & (s) & & & (s) & & & (33) & (198) & (66) \\ 
\midrule
5 & 10 & 3 & 5 & 1 & 1 & 0.8 & 0.3 & 25 & 136 & 41 \\
5 & 10 & 3 & 5 & 1 & 3 & 0.8 & 0.3 & 27 & 143 & 42 \\
10 & 10 & 3 & 10 & 1 & 1 & 0.8 & 0.3 & 28 & 140 & 43 \\
10 & 10 & 3 & 10 & 1 & 3 & 0.8 & 0.3 & 29 & 147 & 46 \\
10 & 10 & 3 & 10 & 1 & 1 & 0.8 & 0.5 & 26 & 141 & 45 \\
10 & 10 & 3 & 10 & 1 & 3 & 0.8 & 0.5 & 27 & 143 & 45 \\
10 & 10 & 3 & 10 & 1 & 1 & 0.9 & 0.3 & 27 & 126 & 41 \\
10 & 10 & 3 & 10 & 1 & 3 & 0.9 & 0.3 & 29 & 138 & 42 \\
20 & 5 & 3 & 66 & 1 & 1 & 0.8 & 0.3 & 27 & 131 & 44 \\
20 & 5 & 3 & 66 & 1 & 3 & 0.8 & 0.3 & 29 & 139 & 45 \\
20 & 10 & 3 & 66 & 1 & 1 & 0.8 & 0.3 & 28 & 142 & 41 \\
20 & 10 & 3 & 66 & 1 & 3 & 0.8 & 0.3 & 29 & 147 & 44 \\
\botrule
\end{tabular}
\end{minipage}
\end{center}
\end{table}

\section{Student Survey Results}\label{sec:apxSurvey}

\begin{figure}[ht!]
\includegraphics[width=\textwidth]{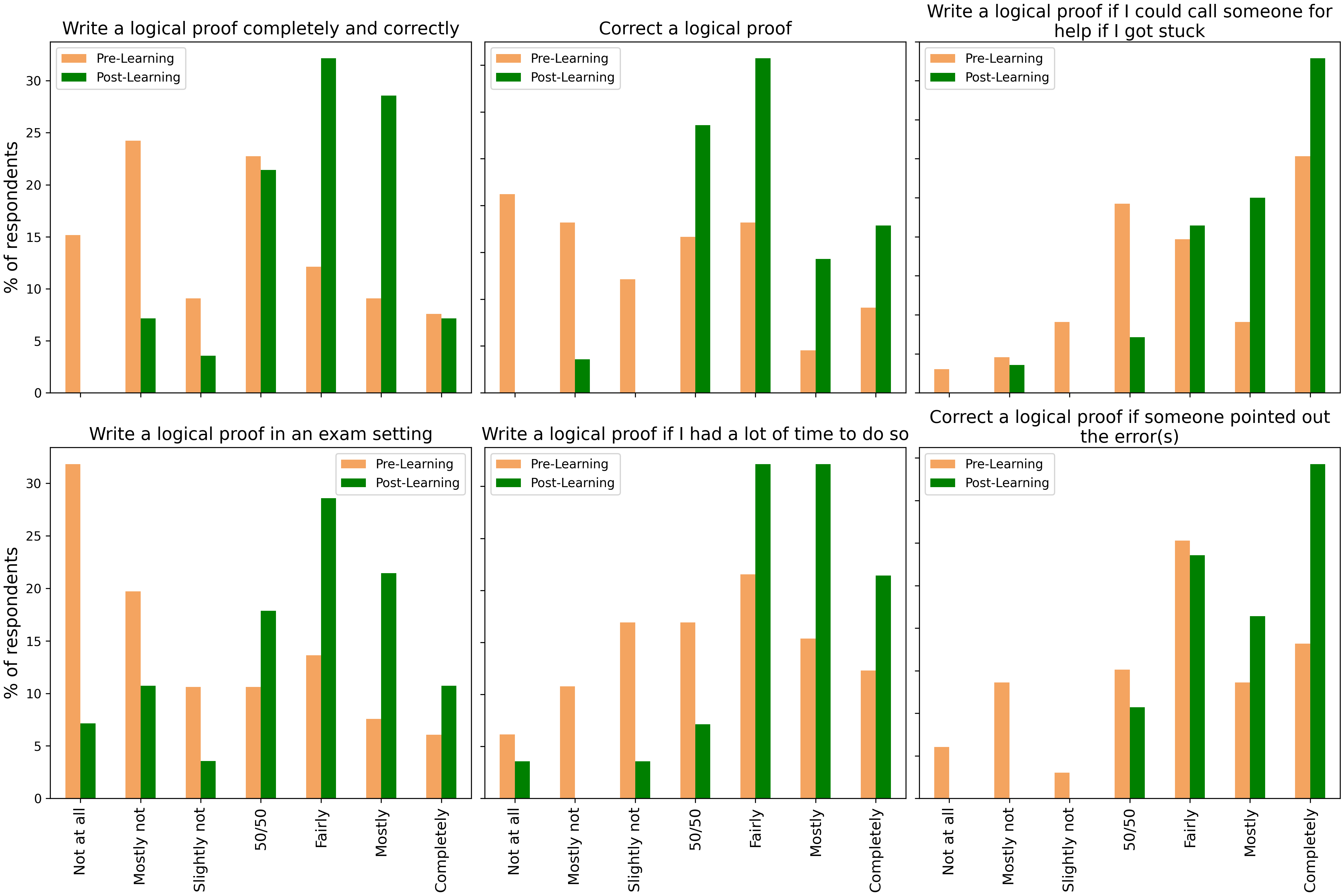}
\caption{Survey 1: confidence scores}\label{fig:s1conf}
\end{figure}

\begin{figure}[ht!]
\includegraphics[width=\textwidth]{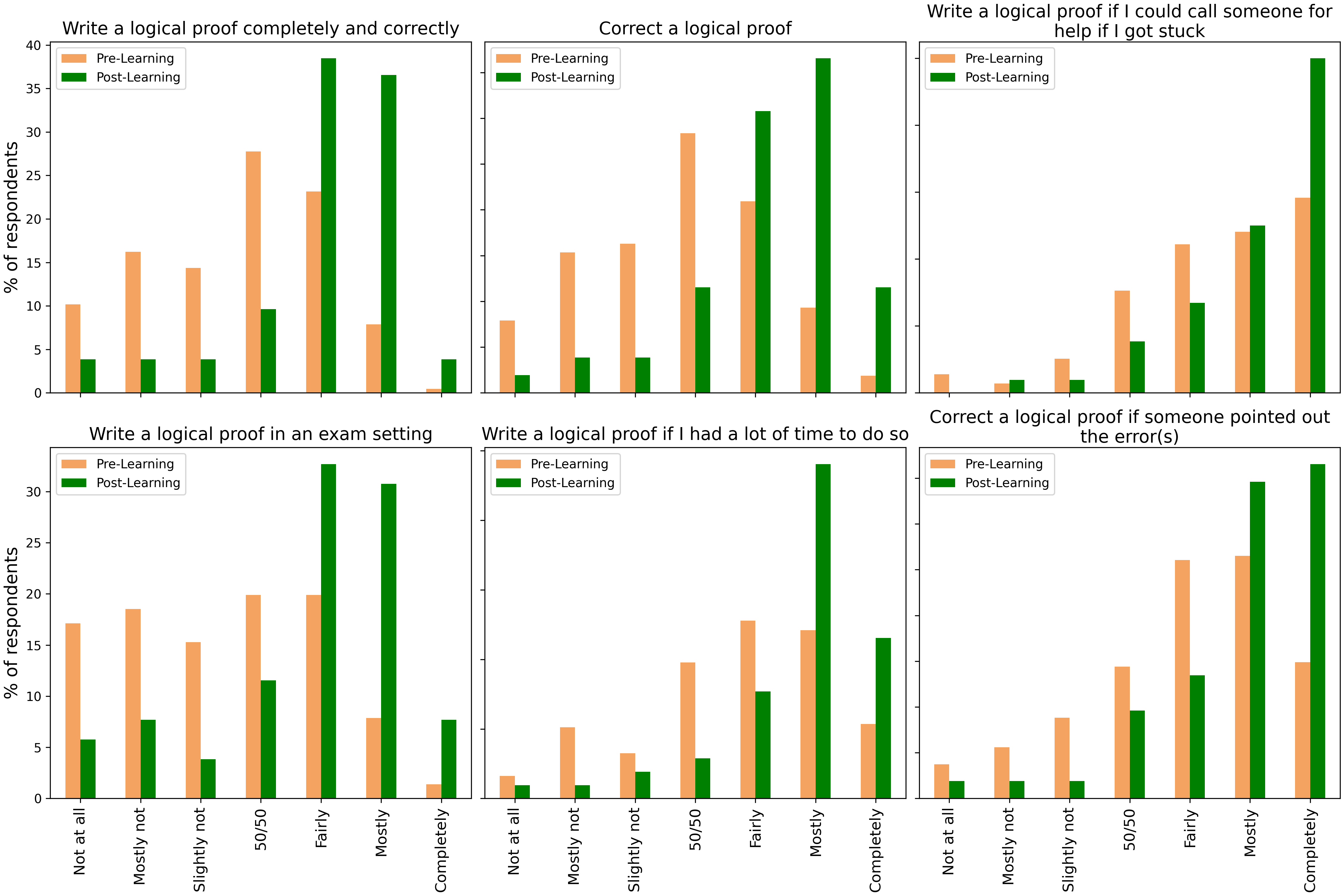}
\caption{Survey 2: confidence scores}\label{fig:s2conf}
\end{figure}

As seen in Figures \ref{fig:s1conf} and \ref{fig:s2conf}, the confidence scores are highly similar across semesters. There were no changes to the {\logiclearner} application in production between evaluations.

\begin{figure}[ht!]
\centering
\begin{minipage}{.5\textwidth}
  \centering
  \includegraphics[width=0.9\linewidth]{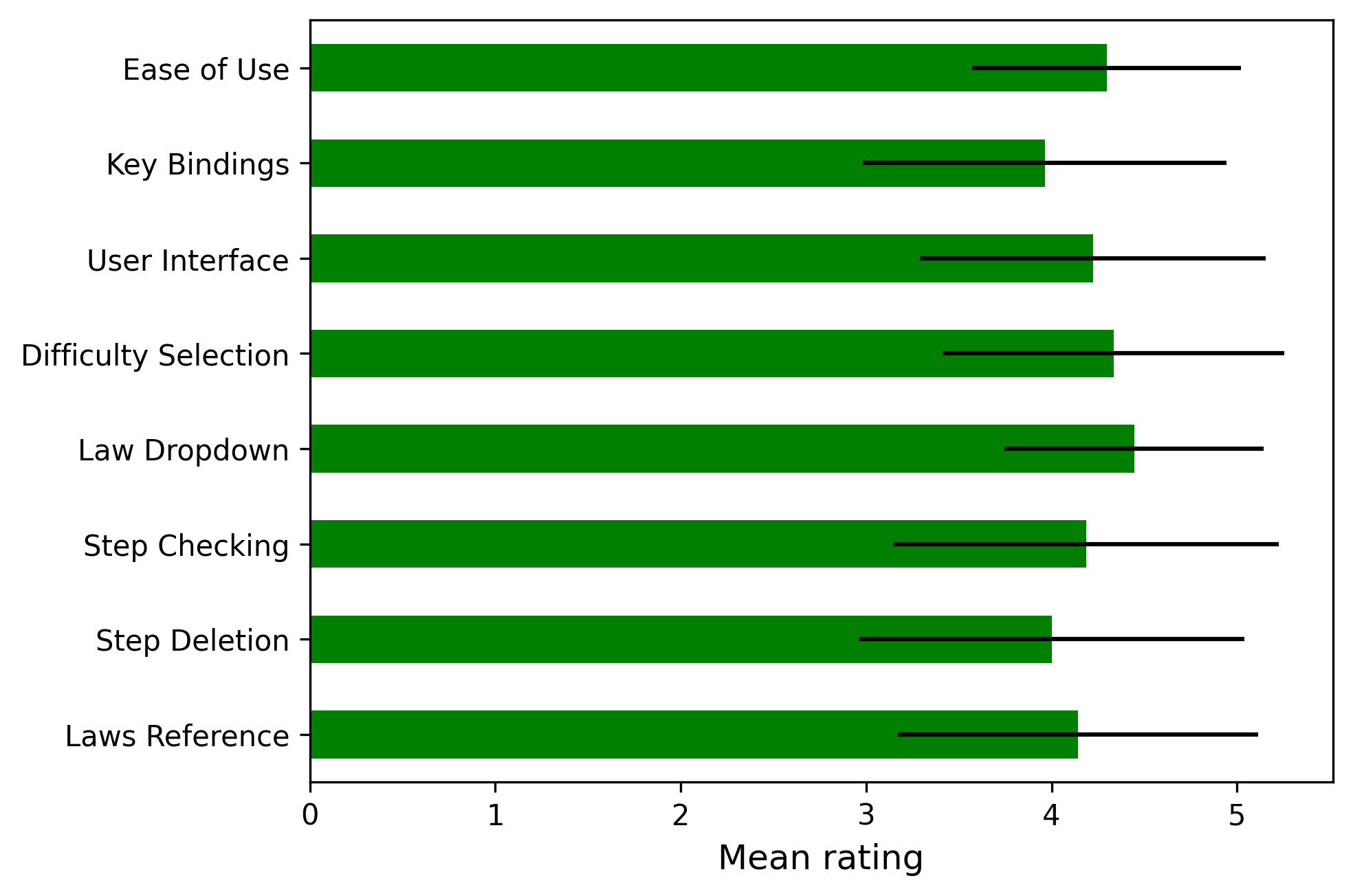}
  \caption{Survey 1: feature ratings}
  \label{fig:s1feat}
\end{minipage}%
\begin{minipage}{.5\textwidth}
  \centering
  \includegraphics[width=0.8\linewidth]{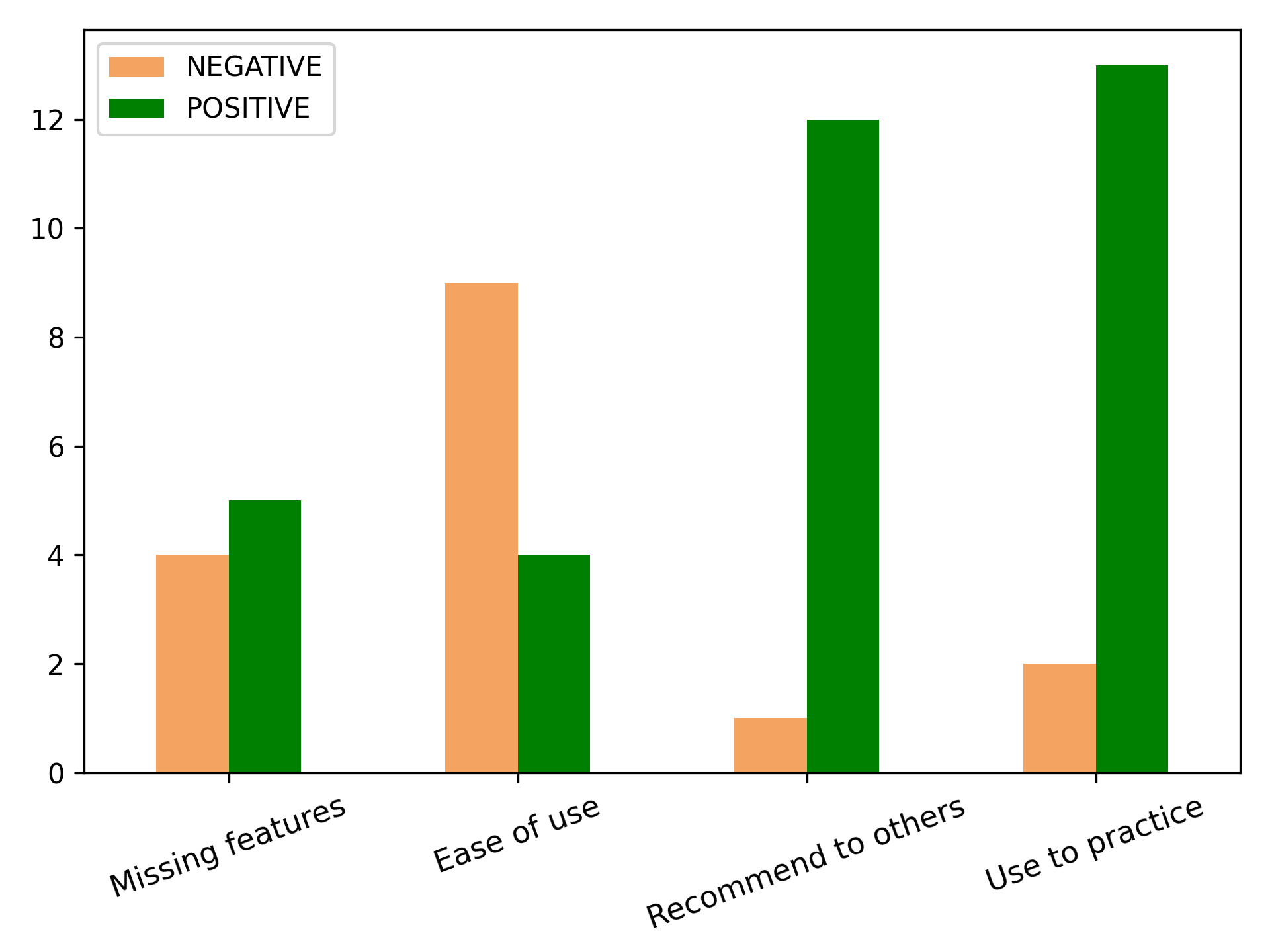}
  \caption{Survey 1: long-form sentiment}
  \label{fig:s1sentiment}
\end{minipage}
\end{figure}

\begin{figure}[ht!]
\centering
\begin{minipage}{.5\textwidth}
  \centering
  \includegraphics[width=0.9\linewidth]{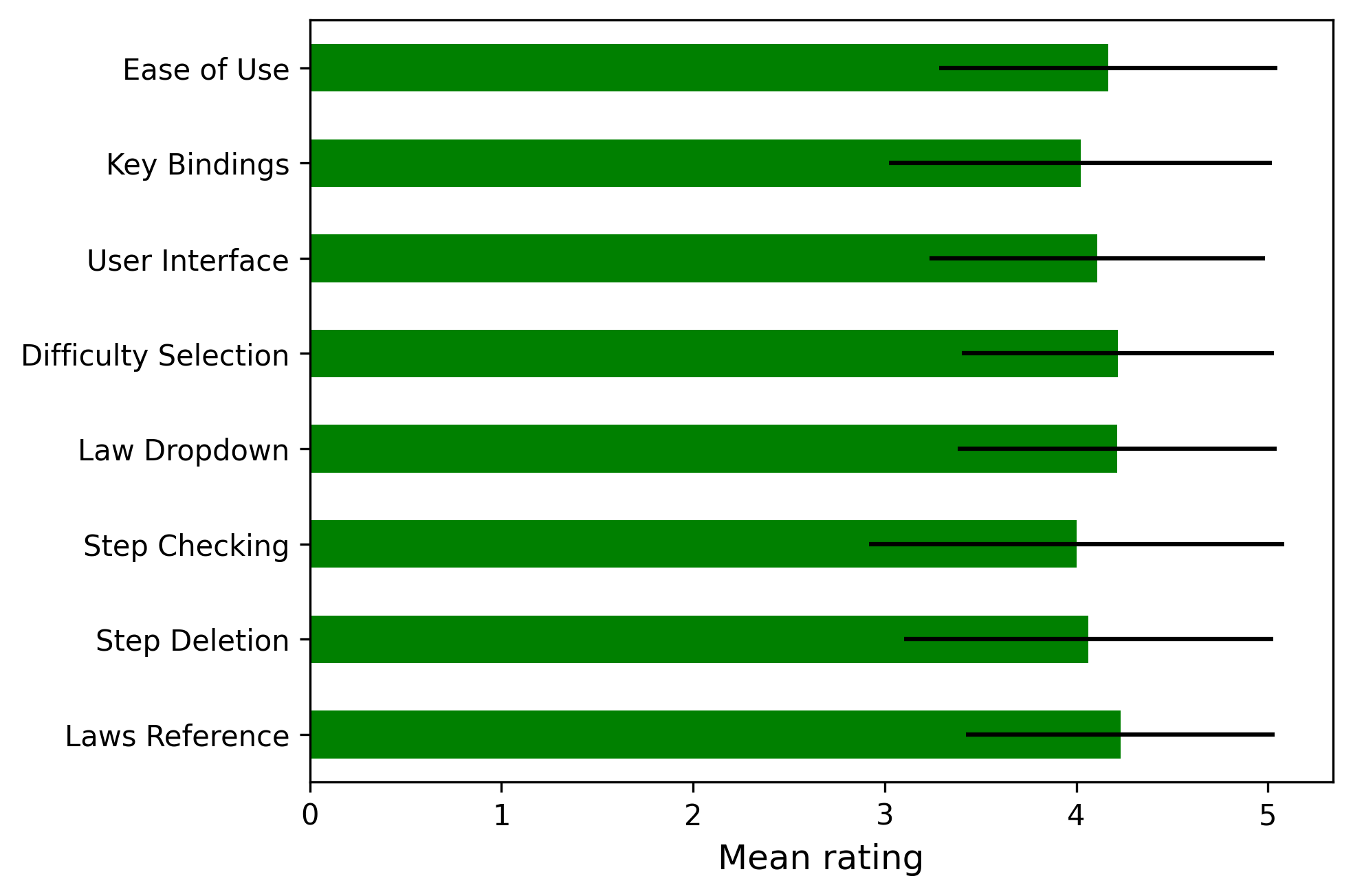}
  \caption{Survey 2: feature ratings}
  \label{fig:s2feat}
\end{minipage}%
\begin{minipage}{.5\textwidth}
  \centering
  \includegraphics[width=0.8\linewidth]{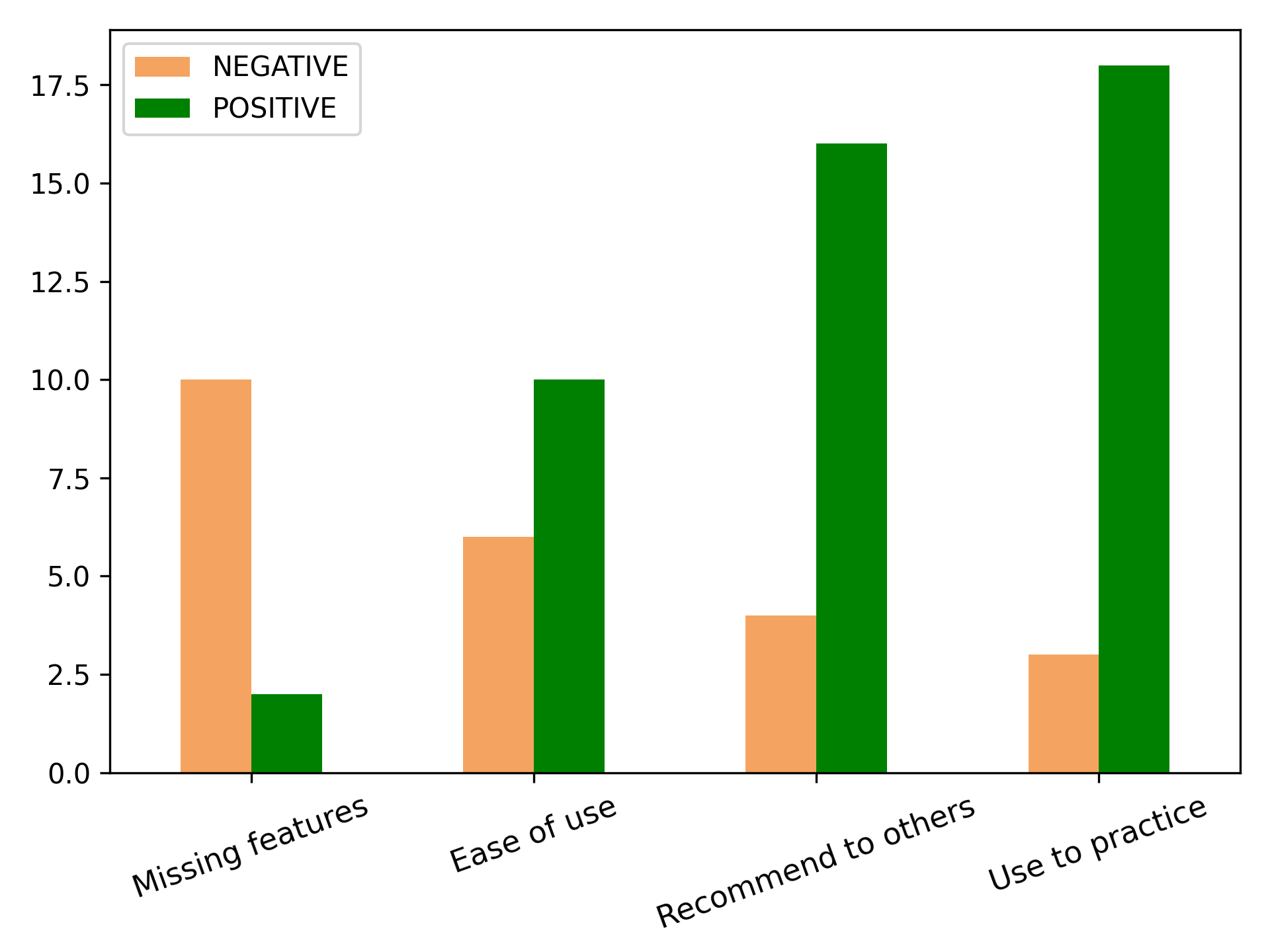}
  \caption{Survey 2: long-form sentiment}
  \label{fig:s2sentiment}
\end{minipage}
\end{figure}

The feature scores and long-form sentiment (Figures \ref{fig:s1feat}, \ref{fig:s2feat}, \ref{fig:s1sentiment}, \ref{fig:s2sentiment}) show a similar consistency across surveyed groups, bolstering our confidence in these results.

Table \ref{tab:longform} shows the number of respondents per question, with the long-form questions written as queried. The number of respondents is significantly higher in the pre-learning, but there are enough respondents (especially when aggregated) to show significant results for the post-learning evaluation. 

\begin{table}[ht!]
\begin{center}
\begin{minipage}{\textwidth}
\caption{Survey questions and number of respondents}\label{tab:longform}
\begin{tabular}{@{}llll@{}}
\toprule
Question & Survey 1 & Survey 2 & Total \\
\midrule
Pre-learning confidence (6 questions) & 65 & 213 & 278 \\
Post-learning confidence (6 questions) & 28 & 52 & 80 \\
\midrule
Feature ratings (scale of 1-5) & 27 & 43 & 70 \\
\midrule
What features are we missing? & 10 & 13 & 23 \\ 
How could we improve our app? & & & \\
\midrule
How easy is it to use our app? & & & \\ 
What were some challenges you faced & 14 & 16 & 30 \\
in using our app? & & & \\
\midrule
Would you recommend Logic Learner & & & \\ 
to others for practicing propositional logic? & 14 & 20 & 34 \\
Why or why not? & & & \\
\midrule
Would you use this app for practicing & 15 & 21 & 36 \\ 
for an exam or homework? & & & \\
\botrule
\end{tabular}
\end{minipage}
\end{center}
\end{table}

\section{Deep Boolean Metric Learning}\label{sec:apxNN}

{\logiclearner}'s hints feature requires a robust and efficient heuristic for proof solving with A* search. In this section, we explore a potentially powerful alternative heuristic to {\logiclearner} AI. Deep metric learning \cite{kaya2019deep} aims to embed data in high-dimensional metric space such that desirable relationships between data points are preserved w.r.t. a chosen metric. Here, we embed Boolean expressions in vector space with the aim of preserving semantic similarity as a metric between expression embeddings. We leverage \textbf{PROOF\_GEN} (Algorithm \ref{alg:qgen}) to produce datasets for neural network training, and train small Siamese encoder-decoder neural networks on two pre-training $k$-way classification tasks, rule prediction and proof length estimation (Table \ref{tab:nnproxy}) with a 70-30 train-test split. We restrict the proof length to be at most 3 steps to reduce the impact of sub-optimal generated proofs. Since there is an infinite vocabulary of Boolean expressions, we tokenize and learn fixed vocabulary embeddings at a character level. We train a Gated Recurrent Unit (GRU) \cite{cho2014learning} encoder and a fully-connected multilayer perceptron (MLP) decoder for 350 epochs on a T4 GPU in Google Colab\footnote{Google. (2024). Google Colaboratory: \url{https://colab.research.google.com/}}. The networks achieve high scores in pre-training despite their simplicity, as shown in Figures \ref{fig:ruleTrain}, \ref{fig:distTrain} and Table \ref{tab:nnproxy}.

\begin{figure}[ht!]
\centering
\begin{minipage}{.5\textwidth}
  \centering
  \includegraphics[width=\linewidth]{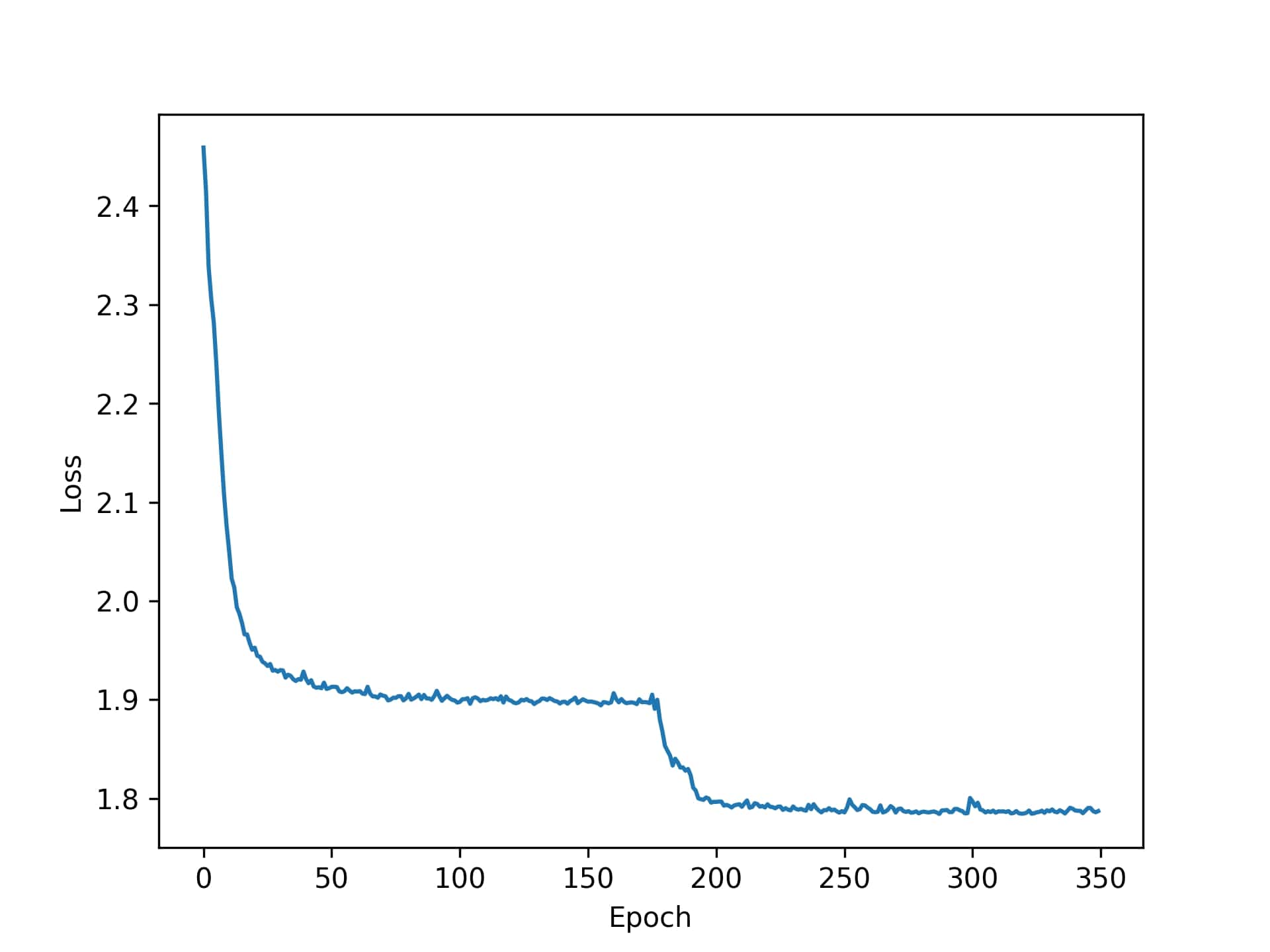}
  \caption{Training loss: rule prediction}
  \label{fig:ruleTrain}
\end{minipage}%
\begin{minipage}{.5\textwidth}
  \centering
  \includegraphics[width=\linewidth]{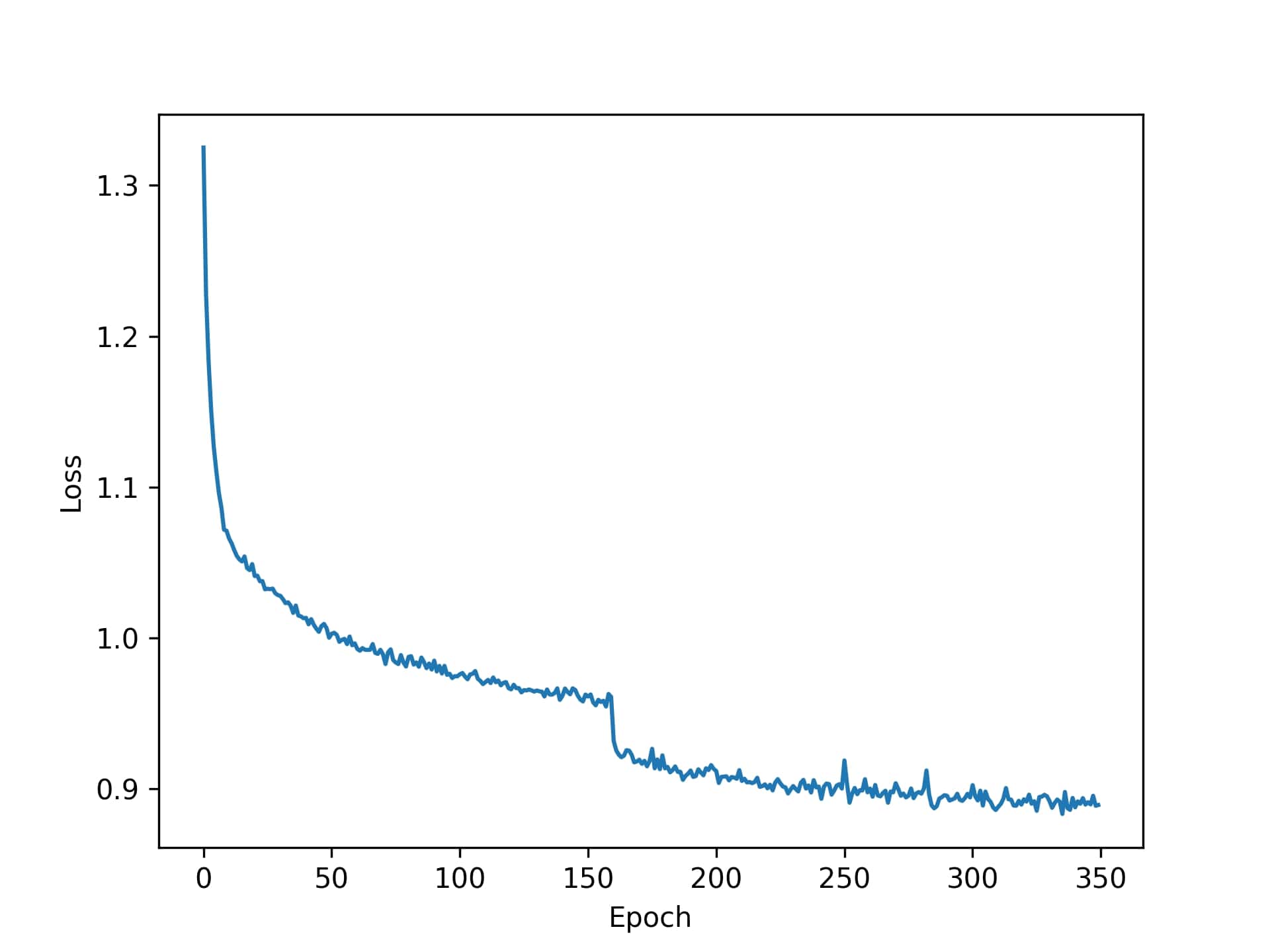}
  \caption{Training loss: proof length}
  \label{fig:distTrain}
\end{minipage}
\end{figure}

\begin{table}[ht!]
\begin{center}
\begin{minipage}{\textwidth}
\caption{Neural Network on Proxy Tasks}\label{tab:nnproxy}
\begin{tabular}{@{}lllccccc@{}}
\toprule
Pre-training & Input & Target & Dataset & Num. & \multicolumn{3}{c}{Top-1 Accuracy (\%)} \\
Task & & & Size & Class & Random & Train & Test \\
\midrule
Rule Prediction & $(e1, e2)$ & $r\in R: e2\in r(e1)$& 3128 & 16 & 6.25 & 83.10 & 81.04  \\
Proof Length & $(e1, e2)$ & Proof length $l\le 3$ & 16797 & 4 & 25 & 86.17 & 80.58 \\
\botrule
\end{tabular}
\end{minipage}
\end{center}
\end{table}

Post training on proxy tasks, we use the cosine similarity between GRU encoder outputs as the heuristic for A* search on our human-curated question bank. As a baseline, we use a randomly initialized GRU encoder without pre-training. As an ablation study on our choice of tokenizer and training data, we also use CANINE-s \cite{DBLP:journals/corr/abs-2103-06874} embeddings as a heuristic. CANINE-s is a Transformer-based language model that does not require an explicit tokenizer, bypassing the limitations of fixed-vocabulary models in parsing Boolean expressions.

\begin{table}[ht!]
\begin{center}
\begin{minipage}{\textwidth}
\caption{Neural Network on Proxy Tasks}\label{tab:nnresults}
\begin{tabular}{@{}llccc@{}}
\toprule
Model & Language & {\logiclearner} & Timeout & Score \\
 & Pre-training & Pre-training & (s) & (33) \\
\midrule
\multirow{6}{*}{GRU Encoder} & \multirow{6}{*}{None} & \multirow{2}{*}{Rule Prediction} & 5 & 7 \\
 & & & 15 & \textbf{13} \\
 & & \multirow{2}{*}{Proof Length} & 5 & 6 \\
 & & & 15 & 9 \\
 & & \multirow{2}{*}{None} & 5 & 7  \\
 & & & 15 & 10 \\
\midrule
\multirow{2}{*}{CANINE-s} & \multirow{2}{*}{Multilingual Wikipedia \cite{DBLP:journals/corr/abs-1810-04805}} & \multirow{2}{*}{None} & 5 & 2 \\
 & & & 15 & 2  \\
\botrule
\end{tabular}
\end{minipage}
\end{center}
\end{table}

Table \ref{tab:nnresults} shows the results of our neural network study. While pre-training on proof length prediction does not help in this case, rule prediction improves the performance at greater search depths. Our ablations show that much of the performance comes from the domain-specific tokenization and network architecture, as CANINE-s---a complex language model with extensive multilingual pre-training---performs significantly worse than our baseline. Additionally, all models except CANINE-s outperform ChatGPT. However, all models are outperformed by our ensemble in production. 
We emphasize again that this is merely a \textbf{proof-of-concept} for potential future works that could leverage {\logiclearner}'s proof-generation and AI structure to build powerful automated logic proof solvers.

\section{ChatGPT Struggles with Logic Proofs}\label{sec:apxChat}

To evaluate ChatGPT as a logic proof solver, we tune the prompts to produce answers that resemble step-by-step proofs using the rules of propositional logic\footnote{Full prompt tuning transcript: \url{https://chat.openai.com/share/9fc940a7-e258-4512-967e-c0d8e780ba8b}}. We do not use complex compositional prompts as our end users (undergraduate students) are unlikely to be familiar with the mathematical prompt tuning literature. Without explicitly asking for rules, ChatGPT generates a truth table to solve proofs. Our final prompt is of the form described in Section \ref{subsec:r2}. With this simple prompt, we ask ChatGPT all questions in our human-curated question bank\footnote{Full proof-evaluation transcript: \url{https://chat.openai.com/share/7b7aa7f5-fb66-4fd3-ae41-172e2bf257d2}}.

While ChatGPT exhibits an astonishing ability to parse input and perform few-step reasoning, it exhibits several error modes with our simple but realistic prompt, detailed in Section \ref{subsec:r2}. ChatGPT interprets the term `rules of logic' to beyond equivalence relations include operations such as Modus Ponens, but we do not count it against the results if the proof is logically correct. We ask it each question once as a naive user will not retry in case of an incorrect answer. By our (lenient) analysis of the results, ChatGPT only solves the 5 questions in Table \ref{tab:gptcorrect} correctly. A majority of these are one-step questions, showing that ChatGPT with minimal prompt tuning can can only reason at very shallow depths.

\begin{table}[ht!]
\begin{center}
\begin{minipage}{\textwidth}
\caption{Questions correctly solved by ChatGPT}\label{tab:gptcorrect}%
\begin{tabular}{@{}c|ll|l@{}}
\toprule
Question & Premise & Target & Comments \\
\midrule
1 & $\neg(\neg p)$ & $p$ & One-step Double Negation \\
3 & $p\to (q\to r)$ & $(p\land q) \to r$ & Uses Modus Ponens cases \\
6 & $\neg p \land \neg q$ & $\neg (p\lor q)$ & One-step De Morgan's law \\
7 & $\neg(p\land \neg q)\lor q$ & $\neg p \lor q$ & De Morgan's $\rightarrow$ Idempotence \\
9 & $(p\lor q)\land (p\lor r)$ & $p\lor (q\land r)$ & One-step Distributivity \\
\botrule
\end{tabular}
\end{minipage}
\end{center}
\end{table}

\end{appendices}

\end{document}